\newcommand{\sign}{\mathrm{sgn}}
\newcommand{\rank}[1]{\mathrm{rank}{(#1)}}
\newcommand{\diag}{\mathrm{diag}}
\newcommand{\trans}[1]{{#1}^{\ensuremath{\mathsf{T}}}} 
\newcommand{\Tr}{\mathrm{Tr}}
\newcommand{\chG}{\Phi} 
\newcommand{\chH}{\chG^{\rm C}} 
\newcommand{\chPI}{\chG^{\rm TH}} 
\newcommand{\chPS}{\chG^{\rm F}} 
\newcommand{\chCS}{\chG^{\rm CS}} 
\newcommand{\chSQ}{\chG^{\rm SQ}} 
\newcommand{\eff}{\bm{Y}} 
\newcommand{\effH}{\eff_{\rm C}} 
\newcommand{\effPI}{\eff_{\rm TH}} 
\newcommand{\effPS}{\eff_{\rm F}} 
\newcommand{\effCS}{\eff_{\rm CS}} 
\newcommand{\effSQ}{\eff_{\rm SQ}} 
\newcommand{\X}{\bm{X}} 
\newcommand{\XH}{\X_{\rm C}} 
\newcommand{\XPI}{\X_{\rm TH}} 
\newcommand{\XPS}{\X_{\rm F}} 
\newcommand{\XCS}{\X_{\rm CS}} 
\newcommand{\XSQ}{\X_{\rm SQ}} 
\newcommand{\disp}{\bm{\alpha}}
\newcommand{\dispin}{\disp_{\rm in}}
\newcommand{\dispinmod}{\bar{\disp}_{\rm in}}
\newcommand{\dispout}{\disp_{\rm out}}
\newcommand{\CM}{\bm{V}}
\newcommand{\CMBE}{\bm{V}_{\rm BE}}
\newcommand{\CMBEmod}{\bm{\bar{V}}_{\rm BE}}
\newcommand{\CMin}{\CM_{\rm in}}
\newcommand{\CMinth}{\CM_{\rm in,th}}
\newcommand{\CMmod}{\CM_{\rm mod}}
\newcommand{\CMinmod}{\bar{\CM}_{\rm in}}
\newcommand{\CMout}{\CM_{\rm out}}
\newcommand{\CMoutmod}{\bar{\CM}_{\rm out}}
\newcommand{\rhoinmod}{\hat{\bar{\rho}}} 
\newcommand{\rhoinmodG}{\hat{\bar{\rho}}^\mathrm{G}} 
\newcommand{\rhoX}{\hat{\rho}_x}
\newcommand{\rhoG}{\hat{\rho}^\mathrm{G}}
\newcommand{\CapG}{C^{\mathrm{G}}}
\newcommand{\CapGchi}{C^{\mathrm{G}}_\chi}
\newcommand{\chiG}{\chi^{\mathrm{G}}}
\newcommand{\muG}{\mu^{\mathrm{G}}}
\newcommand{\U}{U}
\newcommand{\Uone}{{\U}_1}
\newcommand{\Utwo}{{\U}_2}
\newcommand{\SU}{\bm{M}}
\newcommand{\SUtilde}{\tilde{\bm{M}}}
\newcommand{\SUtildetrans}{\trans{\tilde{\bm{M}}}}
\newcommand{\SUchG}{\bm{M}}
\newcommand{\SUchGtrans}{\trans{\bm{M}}}
\newcommand{\SUtrans}{\trans{\bm{M}}}
\newcommand{\SUone}{\bm{M}_1}
\newcommand{\SUonetrans}{\trans{\bm{M}}_1}
\newcommand{\SUtwo}{\bm{M}_2}
\newcommand{\SUtwotrans}{\trans{\bm{M}}_2}
\newcommand{\Sgenrot}{\bm{O}}
\newcommand{\Srot}{\bm{\Theta}}
\newcommand{\SrotchG}{\bm{\Theta}}
\newcommand{\SrotPS}{\bm{\Theta}_{\rm F}}
\newcommand{\SrotPStrans}{\trans{\bm{\Theta}}_{\rm F}}
\newcommand{\Ssq}{\bm{S}}
\newcommand{\Scnot}{\bm{S}_{\rm CNOT}}
\newcommand{\SrotXone}{{{\Srot}_1}_X}
\newcommand{\SrotXonetrans}{{\trans{\Srot}_1}_X}
\newcommand{\SrotXtwo}{{{\Srot}_2}_X}
\newcommand{\thetaXone}{{\theta_1}_X}
\newcommand{\thetaXtwo}{{\theta_2}_X}
\newcommand{\Nthr}{\bar{N}_{\rm thr}}
\newcommand{\Ninmod}{\bar{N}}
\newcommand{\seout}{\nu}
\newcommand{\seoutmod}{\bar{\nu}}
\newcommand{\sey}{y}
\newcommand{\seythr}{y_{\rm thr}}
\newcommand{\EG}{E^{G}}
\newcommand{\setN}{\mathcal E}
\newcommand{\setMu}{\mathcal F}
\newcommand{\thickhline}{%
    \noalign {\ifnum 0=`}\fi \hrule height 1pt
    \futurelet \reserved@a \@xhline
}
\newcounter{thm} 
\newcounter{thmSup} 
\newcounter{cor} 
\newcounter{corSup} 
\newtheorem{thmmapeq}[thm]{Theorem}
\newtheorem{thmmapeqSup}[thmSup]{Theorem}
\newtheorem{corCap}[cor]{Corollary}
\newtheorem{corCapG}[cor]{Corollary}
\newtheorem{corCapBound}[cor]{Corollary}
\newtheorem{corCapGSup}[corSup]{Corollary}
\def\comment#1{}
\def\section#1{{\par\em #1.---}}
\def\togli#1{}
\begin{document}

\title{Equivalence Relations for the Classical Capacity of Single-Mode Gaussian Quantum channels}

\author{Joachim Sch\"afer}\affiliation{QuIC, Ecole Polytechnique de Bruxelles, CP 165, Universit\'e Libre de Bruxelles, 1050 Brussels, Belgium}
\author{Evgueni Karpov}\affiliation{QuIC, Ecole Polytechnique de Bruxelles, CP 165, Universit\'e Libre de Bruxelles, 1050 Brussels, Belgium}
\author{Ra\'ul Garc\'ia-Patr\'on}\affiliation{QuIC, Ecole Polytechnique de Bruxelles, CP 165, Universit\'e Libre de Bruxelles, 1050 Brussels, Belgium}
\affiliation{Max-Planck-Institut f\"ur Quantenoptik, Hans-Kopfermann-Stra\ss e 1, 85748 Garching, Germany}
\author{Oleg V. Pilyavets}\affiliation{QuIC, Ecole Polytechnique de Bruxelles, CP 165, Universit\'e Libre de Bruxelles, 1050 Brussels, Belgium}
\author{Nicolas J. Cerf}\affiliation{QuIC, Ecole Polytechnique de Bruxelles, CP 165, Universit\'e Libre de Bruxelles, 1050 Brussels, Belgium}


\begin{abstract} 
We prove the equivalence of an arbitrary single-mode Gaussian quantum channel and a newly defined fiducial channel preceded by a phase shift and followed by a Gaussian unitary operation.  
This equivalence implies that the energy-constrained classical capacity of any single-mode Gaussian channel can be calculated based on this fiducial channel, which is furthermore simply realizable with a beam splitter, two identical single-mode squeezers, and a two-mode squeezer. In a large domain of parameters, we also provide an analytical expression for the Gaussian classical capacity, exploiting its additivity, and prove that the classical capacity cannot exceed it by more than $1/\ln 2$ bits.
\end{abstract}

\maketitle
\section{Introduction}Quantum channels play a key role in quantum information theory.
In particular,
bosonic Gaussian channels model most optical communication links, such as
optical fibers or free space information transmission~\cite{GaussRev1,GaussRev2}. One of the central characteristics of quantum channels is their classical capacity.
A lot of attention has already been devoted to the study of the classical capacity of Gaussian channels \mbox{\cite{G04,Channels1,Channels2,Channels6,Channels7,ChannelsSH1,ChannelsSH2,ChannelsSH3,H98,Channels3,Channels4,Channels5,HW01,H05,LPM09,SDKC09,PLM09,SKC10,SKC11,SKC12}}. Since Gaussian encodings are more relevant for experimental implementations, easier to work with analytically, and conjectured to be optimal \cite{HW01}, the so-called Gaussian classical capacity was evaluated for specific Gaussian channels \cite{Channels4,Channels5,H05,LPM09,SDKC09,PLM09,SKC10,SKC11,SKC12}. 

In this Letter, we greatly simplify the calculation of these capacities~\cite{GaussianComment} for an arbitrary single-mode Gaussian channel. Namely, we show that any such channel is indistinguishable from a newly defined fiducial channel, preceded by a phase shift and followed by a general Gaussian unitary. Since neither the phase shift at the channel's input nor the Gaussian unitary at the channel's output affects the input energy constraint or changes the output entropy, the capacities of this channel are equal to those of the fiducial channel. This conclusion also holds for any cascade of Gaussian channels since the latter is equivalent to another Gaussian channel. Our results allow us to go beyond previous works on the Gaussian capacity \cite{PLM09,SKC10,SKC11} and provide its unified analytical expression valid for any Gaussian channel in some energy range, where it is additive. In this range we prove that the capacity cannot exceed the Gaussian capacity by more than $1/\ln 2$ bits (generalizing \cite{KS13}), the latter becoming the actual capacity if the minimum-output entropy conjecture for phase-insensitive Gaussian channels~\cite{HW01,GGLMS04} is true.

\section{Gaussian channel}Let $\rhoG(\disp,\CM)$ be a single-mode Gaussian state, where the coherent vector $\disp \in \mathbb{R}^{2}$ and the covariance matrix (CM) $\CM \in \mathbb{R}^2 \times \mathbb{R}^2$ are the first- and second-order moments of the $2$ dimensionless quadratures, respectively, with $\hbar=1$. Then, a single-mode Gaussian channel $\chG$ is a completely positive trace-preserving map which is closed on the set of Gaussian states~\cite{HW01}. It transforms input states with moments $\{\dispin,\CMin\}$ to output states with moments $\{\dispout,\CMout\}$ according to
\begin{equation}\label{eq:chgauss}
	\dispout  =\X \dispin + \bm{\delta},\quad
	\CMout  =\X \CMin \trans{\X} + \eff,
\end{equation}
where $\bm{\delta}$ is the displacement induced by the channel, $\X$ is a $2 \times 2$ real matrix, and $\eff$ is a $2 \times 2$ real, symmetric, and non-negative matrix. For simplicity, we choose $\bm{\delta}=0$ in what follows (the capacity is not affected by $\bm{\delta}$), and focus on the action of the map $\chG$ on second-order moments using the simplified notation $\chG(\CMin) = \CMout$. Then, the map $\chG$  is fully characterized by matrices $\X$ and $\eff$, which must satisfy  $\eff+i\left(\bm{\Omega}-\X \bm{\Omega} \trans{\X}\right)/2 \ge 0$ \cite{HW01}, where
\begin{equation}
	\bm{\Omega}=\begin{pmatrix} 0 & 1\\-1 & 0 \end{pmatrix},
\end{equation}
is the symplectic form~\cite{DeGosson}. In the following, we use the parameters 
\begin{equation}
	\tau = \det{\X}, \quad y = \sqrt{\det{\eff}},    \label{def-tau-y}
\end{equation}
where $\tau$ may be a channel transmissivity (if $0 \le \tau\le1$) or amplification gain (if $\tau\ge1$), while $y$ characterizes the added noise. The map $\chG$ describes a
quantum channel if $\sey \ge |\tau-1|/2$~\cite{GNLSC12}. Moreover, it is an entanglement breaking channel if $\sey~\ge~(|\tau|+1)/2$ \cite{H08}. The single-mode Gaussian channels can therefore be conveniently represented in a ($\tau$,$\sey$) plane, see Fig.~\ref{fig:entbreak}.
\section{Canonical decomposition}Any single-mode Gaussian channel $\chG$ can be decomposed as $\chG = \Utwo
\circ \chH \circ \Uone$, where $\Uone$ and $\Utwo$ are Gaussian unitaries, and $\chH$ is a canonical
channel characterized by the matrices $(\XH,\effH)$~\cite{H07CGH061,H07CGH062,H07CGH063}. The action of a
Gaussian unitary $U$ on a Gaussian state can be completely specified by a symplectic transformation $\SU$
acting on the second-order moments of the state (we ignore first-order moments), so that the canonical
decomposition may be written as $(\Utwo \circ \chH \circ \Uone)(\CMin)=\SU_2 \, \chH(\SU_1 \CMin
\SUtrans_1) \, \SUtrans_2$. One can define seven classes of canonical channels $\chH$ (see
Table~\ref{table:channels}) \cite{H07CGH061,H07CGH062,H07CGH063}.
The first five channels in Table~\ref{table:channels} can be treated together, and we refer to them collectively as thermal channels, $\chPI_{(\tau,\sey)}$:
\begin{equation}\label{eq:chpi}
	\XPI = \begin{pmatrix}
		\sqrt{|\tau|} & 0\\0 & \sign(\tau) \sqrt{|\tau|} 
	\end{pmatrix}, \;
	\effPI = \begin{pmatrix}
		\sey & 0\\0 & \sey
	\end{pmatrix},
\end{equation}
where $\sign(\tau)=-1$ if $\tau < 0$ and $\sign(\tau)=1$ if $\tau \ge 0$. 
As shown in Fig.~\ref{fig:graph_ad} (a), any channel $\chPI$ can be physically realized by a beam splitter with transmissivity $T$ followed by a two-mode squeezer (TMS) with gain $G$ \cite{GNLSC12}. For the zero-transmission ($\tau=0$), lossy ($0 \le \tau \le 1$), amplification ($\tau \ge 1$), and classical additive-noise channel ($\tau=1$), the output is given by the signal's output of the TMS, and these four canonical channels correspond to phase-insensitive channels.
For the fifth canonical channel, i.e., the phase-conjugating channel ($\tau<0$), the output is given by the idler's output of the TMS.
These five channels map any thermal state to a thermal state, so we call them thermal channels. 
Each particular channel $\chPI_{(\tau,\sey)}$ corresponds to a single point in Fig.~\ref{fig:entbreak}, where the relations between $(\tau,\sey)$ and $(T,G)$ are given in Table \ref{table:channels}. Finally, the sixth and seventh canonical channels are the classical signal (or quadrature erasing) channel $\chCS$ and the single-quadrature classical noise channel $\chSQ$, which are not thermal channels (see \cite{SUPPMAT}).

\begin{figure}[t]
	\centering
	\includegraphics[width=0.495\textwidth]{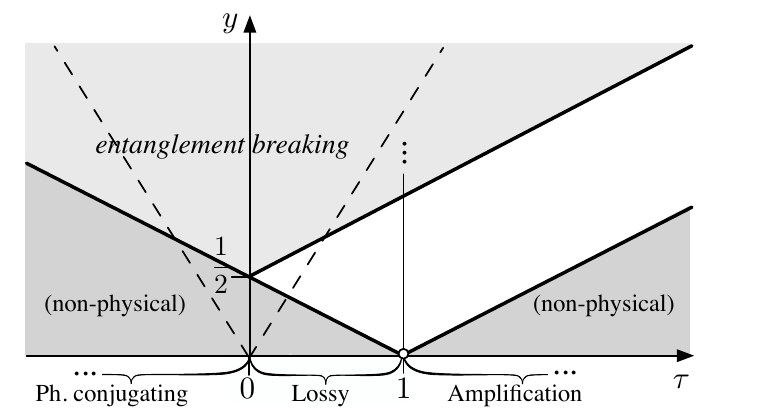}
	\caption{Admissible regions in the parameter space $(\tau,y)$ for Gaussian quantum channels. Each thermal channel $\chPI_{(\tau,\sey)}$ is associated with a particular point $(\tau,y)$. 
	The vertical line $\tau=0$ corresponds to the zero-transmission channel as well as the classical signal channel $\chCS$. The vertical line $\tau=1$ corresponds to the classical additive-noise channel. Both the perfect transmission channel and the single-quadrature classical noise channel $\chSQ$ correspond to $(\tau=1,y=0)$. The Gaussian capacity of  $\chPS_{(\tau,\sey,s)}$ is additive if $\Ninmod \ge \Nthr$. This is equivalent to $\sey \le \seythr = |\tau|(e^{-2|s|}(1+2\Ninmod)-1)/(1-e^{-4|s|})$. An example of $\seythr$ is given by the dashed line, where $\bar{N}=0.5$ and $s=0.12$.}
	\label{fig:entbreak}
\end{figure}

\section{Fiducial channel}
Now, our central point is that the above canonical decomposition is not always useful for evaluating capacities of bosonic channels with input energy constraint (which is needed, otherwise the capacities are infinite). 
Indeed, the Gaussian unitary $U_1$ that precedes the canonical channel $\chH$ affects, in general, the input energy.
Therefore, we introduce a new decomposition in terms of a fiducial channel $\chPS$,
where the preceding unitary is passive and does not affect the input energy restriction. 
We show that this decomposition has the major advantage that the energy-constrained capacity of any Gaussian channel reduces to that of the fiducial channel $\chPS$.
The latter generalizes $\chPI$ by introducing squeezing in the added noise
\begin{equation}\label{eq:chPS}
	\XPS = \XPI, \quad
	\effPS = \sey\begin{pmatrix}
		e^{2s} & 0\\0 & e^{-2s}
	\end{pmatrix}.
\end{equation}
Thus, it depends on three parameters $(\tau,\sey,s)$, and we denote it by $\chPS_{(\tau,\sey,s)}$. 
This channel can be physically realized by the setup depicted in Fig.~\ref{fig:graph_ad} (b), where the ``idler'' corresponds again to the output of the phase-conjugating channel and the ``signal'' to that of the other channels.
In the case $0 \le \tau \le 1$, this channel corresponds to the mixing of the input state with an arbitrary squeezed thermal state on a beam splitter with transmissivity $\tau$. 
The fiducial channel $\chPS$ can be used to decompose any Gaussian channel $\chG$ (by taking proper limits, if necessary) \cite{SUPPMAT}.

\begin{figure}[t]
	\centering
	\includegraphics[width=0.49\textwidth]{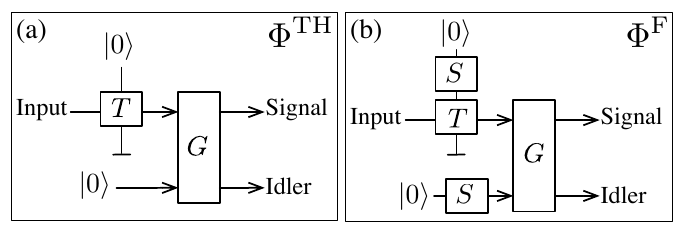}
	\caption{Realization of (a)  the thermal channel $\chPI$  and  (b)  the fiducial channel $\chPS$ by a beam splitter with transmissivity $T$, a two-mode squeezer with gain $G$, and  a single-mode squeezer $S$. Here  $\ket{0}$ stands for the vacuum state, and ``$\dashv$'' denotes ``tracing out'' the mode.}
	\label{fig:graph_ad}
\end{figure}

\begin{thmmapeq}\label{thmmapeq} 
	For a single-mode Gaussian channel $\chG$ defined by matrices $\X$ and $\eff$ with $\tau\ne 0$ and $\sey >0$, there exists a fiducial channel $\chPS_{(\tau,\sey,s)}$ defined by matrices $\XPS(\tau)$, $\effPS(y,s)$ with $\tau$ and $y$ obtained from Eq. (\ref{def-tau-y}), a symplectic transformation $\SUchG$, and a rotation in phase space $\SrotchG$ such that
\begin{equation}	
	\X = \SUchG \, \XPS(\tau) \, \SrotchG, \; \eff = \SUchG \, \effPS(y,s) \, \SUchGtrans,     \label{eq:mapeqPI}
\end{equation}	
	where the explicit dependencies of $\SUchG$, $\SrotchG$, and $s$ on the parameters of the channel $\chG$ are presented in the Supplemental Material \cite{SUPPMAT}. 
\begin{proof}
We only sketch the proof here (see \cite{SUPPMAT} for the full proof). 
First, one finds matrices $\Srot_Y$ and $\Ssq_Y$ such that $\Ssq^{-1}_Y\trans{\Srot}_Y \eff\Srot_Y \Ssq_Y^{-1}=\diag{(y,y)}$, where $\Srot_Y$ and $\Ssq_Y$ denote matrices corresponding to a rotation and a squeezing operation, respectively.
Second, one obtains the singular value decomposition $\X=\SrotXone \Ssq_X \XPS \SrotXtwo$, where $\XPS$ reads as in Eq.~\eqref{eq:chPS}. Then one defines $\SUchG=\SrotXone \Ssq_X \SrotPStrans$, where $\SrotPS$ is found such that $\SUchGtrans \eff \SUchG = \effPS = y \, \diag{(e^{2s},e^{-2s})}$.
The squeezing parameter $s$ depends on all angles and squeezing operations $\Ssq_X, \Ssq_Y$. Finally, one introduces $\SrotPS$ in $\X$, i.e. $\X=\SrotXone \Ssq_X \SrotPStrans \SrotPS \XPS \SrotXtwo = \SUchG \XPS \SrotchG$, where $\SrotchG$ depends on $\SrotXtwo, \SrotPS$, and the sign of $\tau$. Despite Theorem~\ref{thmmapeq} requires that $\tau\ne 0$ and $\sey >0$, or, equivalently, that $\rank{\X}=\rank{\eff}=2$, it can be extended to lower-rank cases with minor modifications \cite{SUPPMAT}. 
\end{proof}
\end{thmmapeq}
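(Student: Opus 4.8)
\emph{Proof idea.} The plan is to proceed in the same spirit as the sketch above, but to organize everything around the symmetric square root of $\eff$, which decouples the two identities of Eq.~\eqref{eq:mapeqPI} and makes the construction of $\SUchG$, $\SrotchG$ and $s$ transparent. Since $\det\eff=\sey^{2}>0$ and $\eff\ge 0$, the matrix $\eff$ is symmetric positive definite, so I would introduce its unique symmetric positive-definite square root $\bm P:=(\eff/\sey)^{1/2}$ and write $\eff=\sey\,\bm P^{2}$. Because $\det(\eff/\sey)=1$ and a real $2\times 2$ matrix is symplectic exactly when it has unit determinant, $\bm P$ is symplectic; it plays here the combined role of the rotation $\Srot_Y$ and the squeezing $\Ssq_Y$ of the sketch.

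Next I would reduce both identities in Eq.~\eqref{eq:mapeqPI} to a single singular value decomposition. Writing $\effPS(\sey,s)=\sey\,\Ssq_s^{2}$ with the squeezing $\Ssq_s:=\diag(e^{s},e^{-s})$, the identity for $\eff$ is equivalent to $(\SUchG\Ssq_s)\,\trans{(\SUchG\Ssq_s)}=\bm P\,\trans{\bm P}$, hence holds as soon as $\SUchG\,\Ssq_s=\bm P\,\Sgenrot$ for some rotation $\Sgenrot$, i.e.\ $\SUchG=\bm P\,\Sgenrot\,\Ssq_s^{-1}$, which is then automatically symplectic. Substituting this into the identity for $\X$ turns it into $\XPS(\tau)\,\SrotchG=\Ssq_s\,\trans{\Sgenrot}\,\bm P^{-1}\X$. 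Setting $\bm B:=\bm P^{-1}\X$, one has $\det\bm B=\tau\neq 0$, so $\bm B$ is invertible, and I take an SVD $\bm B=\Sgenrot_1\,\diag(\lambda_1,\lambda_2)\,\Sgenrot_2$ with $\lambda_1,\lambda_2>0$, $\lambda_1\lambda_2=|\tau|$, and $\Sgenrot_1\in SO(2)$ (always arrangeable), so that $\det\Sgenrot_2=\sign\tau$. Taking $\Sgenrot=\Sgenrot_1$ and $s=\tfrac12\ln(\lambda_2/\lambda_1)$ makes $\Ssq_s\,\diag(\lambda_1,\lambda_2)$ equal to $\sqrt{|\tau|}$ times the identity, so that $\Ssq_s\,\trans{\Sgenrot}\,\bm B=\sqrt{|\tau|}\,\Sgenrot_2$. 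Since $\XPS(\tau)=\sqrt{|\tau|}\,\diag(1,\sign\tau)$ by Eqs.~\eqref{eq:chpi} and \eqref{eq:chPS}, and $\Sgenrot_2$ is orthogonal with determinant $\sign\tau$, one writes $\Sgenrot_2=\diag(1,\sign\tau)\,\SrotchG$ with $\SrotchG\in SO(2)$ (its determinant is $(\sign\tau)^{2}=1$), which closes the identity for $\X$. Reading off $\SUchG=\bm P\,\Sgenrot_1\,\Ssq_s^{-1}$, $\SrotchG=\diag(1,\sign\tau)\,\Sgenrot_2$ and $s$ completes the construction; their dependence on $\X$ and $\eff$ is explicit through $\bm P$ and the SVD of $\bm P^{-1}\X$, and since $\chPS_{(\tau,\sey,s)}$ shares $\tau$ and $\sey$ with $\chG$ it is again a legitimate channel.

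I expect the genuine obstacle to be exactly the orientation/sign bookkeeping flagged by the ``depends on the sign of $\tau$'' remark in the sketch: an SVD only supplies orthogonal, not special-orthogonal, factors, and $\XPS(\tau)$ itself carries a reflection when $\tau<0$, so one must track the $\diag(1,-1)$ factors carefully through every step to be sure that $\SUchG$ comes out genuinely symplectic ($\det\SUchG=+1$) while $\SrotchG$ comes out a genuine rotation rather than a reflection --- which is consistent only because $\det\Sgenrot_1$ and $\det\Sgenrot_2$ are tied to $\sign\tau$ as above. Finally, the rank-deficient cases $\tau=0$ or $\sey=0$ fall outside this argument, since then $\bm P$ or $\bm B$ fails to be invertible; I would treat them separately, either by a continuity argument from the generic case or by matching $\X$ and $\eff$ directly against the two non-thermal canonical channels $\chCS$ and $\chSQ$, as anticipated at the end of the statement.
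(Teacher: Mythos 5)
Your construction is correct, and it takes a genuinely different route from the paper's. The paper first performs a rotation-plus-squeezing diagonalization of the noise matrix, $\eff=\Srot_Y\Ssq_Y\effPI\Ssq_Y\trans{\Srot}_Y$, separately takes the SVD $\X=\SrotXone\Ssq_X\XPS\SrotXtwo$, and then must solve for an extra rotation $\SrotPS$ that re-diagonalizes the conjugated noise matrix $y\,\SrotPS\Ssq_X^{-1}\SrotXonetrans\Srot_Y\Ssq_Y^2\trans{\Srot}_Y\SrotXone\Ssq_X^{-1}\SrotPStrans$; this yields explicit but rather unwieldy closed-form expressions for $s$ and the angle of $\SrotPS$ in terms of $s_X,s_Y,\theta_Y,\thetaXone$. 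You instead absorb the entire noise structure into the symmetric symplectic square root $\bm{P}=(\eff/\sey)^{1/2}$ and extract $\SUchG$, $\SrotchG$ and $s$ from a single determinant-normalized SVD of the composite matrix $\bm{P}^{-1}\X$; the verification is then a two-line computation ($\SUchG\effPS\SUchGtrans=\sey\,\bm{P}\Sgenrot\trans{\Sgenrot}\trans{\bm{P}}=\eff$ and $\SUchG\XPS\SrotchG=\bm{P}\Sgenrot\Ssq_s^{-1}\cdot\Ssq_s\trans{\Sgenrot}\bm{P}^{-1}\X=\X$), and your sign bookkeeping ($\det\Sgenrot_1=1$ forcing $\det\Sgenrot_2=\sign\tau$, hence $\det\SrotchG=1$) is exactly what is needed. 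What your version buys is economy and transparency --- no separate equation for $\SrotPS$ has to be solved, and the uniqueness of $|s|$ is manifest since $\cosh(2s)=\tfrac{|\tau|}{2y}\Tr[\eff(\X\trans{\X})^{-1}]$ is a channel invariant. What the paper's version buys is explicit formulas for all parameters in terms of the eigendecomposition of $\eff$ and the SVD of $\X$ individually, together with an immediate recovery of the canonical decomposition $\chG=\Utwo\circ\chPI\circ\Uone$ as a by-product, which the paper reuses when extending to the degenerate ranks. Your deferral of the cases $\tau=0$ or $y=0$ to a limiting argument matches both the theorem's hypotheses and the paper's own treatment.
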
 

\begin{table*}
	\begin{tabular}{| l || c | c | c | c | c | c | c | c | c |}	
			\hline
				Channel 							& Symbol 				& Class 					& $\XH$ 						& $\effH$ 							& $\tau$		& Domain of $\tau$	& Domain of $y$ 						\\ \noalign{\hrule height 1.5pt}
				Zero-transmission 					& ${\mathcal A}_1$ 	& \multirow{6}{*}{$\chPI$}	& $0$ 						& $(G-1/2) \openone$ 				& $0$			& $0$ 			& $[1/2,\infty)$ 					\\ \cline{1-2} \cline{4-8}			
				Classical additive noise 			& ${\mathcal B}_2$ 	& 							& $\openone$				& $(G-1) \openone$ 				& $TG=1$			& $1$ 			& $[0,\infty)$ 				\\ \cline{1-2} \cline{4-8}
				Lossy 								& ${\mathcal C}_L$ 	& 							& $\sqrt{\tau} \openone$	& $[G(1-T/2)-1/2]\openone$ 		& $TG$			& $[0,1]$ 		& $[(1-\tau)/2,\infty)$ 		\\ \cline{1-2} \cline{4-8}
				Amplification 						& ${\mathcal C}_A$ 	& 							& $\sqrt{\tau} \openone$	& $[G(1-T/2)-1/2]\openone$ 		& $TG$			& $[1,\infty)$	& $[(\tau-1)/2,\infty)$		\\ \cline{1-2} \cline{4-8}
				Phase-conjugating 					& ${\mathcal D}$ 	& 							& $\sqrt{|\tau|} \sigma_z$	& $[(1-T)(G-1)/2+G/2]\openone$ 		& $-T(G-1)$		& $(-\infty,0]$ & $[(1-\tau)/2,\infty)$ 	\\ \noalign{\hrule height 1.5pt}
				Classical-signal					& ${\mathcal A}_2$ 	& $\chCS$ 					& $(\openone+\sigma_z)/2$	& $(G-1/2) \openone$ 				& 0				& 0 			& $[1/2,\infty)$ 						\\ \hline
				Single-quad. cl. noise 				& ${\mathcal B}_1$ 	& $\chSQ$					& $\openone$				& $(\openone-\sigma_z)/4$			& 1				& 1				& 0 						\\ \hline
			\end{tabular}
	\caption{Canonical channels $\chH$ as defined in \cite{H07CGH061,H07CGH062,H07CGH063}, and their new representation in terms of $\chPI, \chCS$, $\chSQ$ and the corresponding matrices $(\XH,\effH)$, where $\sigma_z=\diag(1,-1)$. The  transmissivity  $T \in [0,1]$ of the beam splitter and the gain  $G \ge 1$ of the two-mode squeezer correspond to the physical schemes in Fig.~\ref{fig:graph_ad} and \cite{SUPPMAT}.}
	\label{table:channels}
\end{table*}

\section{Capacities}The energy-constrained capacity $C$ of the Gaussian channel $\chG$ is defined as the maximal amount of bits that can be transmitted per use of the channel $\chG$ given the mean photon number $\Ninmod$ at its input, i.e. \cite{SW97,H98}
\begin{equation}
	C(\chG,\Ninmod) = \lim_{n \to \infty}\frac{1}{n}C_\chi(\chG^{\otimes n},n\Ninmod),
	\label{eq:capacity}
\end{equation}
where $n$ is the number of channel uses, and $C_\chi$ is the one-shot capacity of the channel, i.e. 
\begin{equation}\label{eq:chi}
	\begin{split}
		C_\chi(\chG,\Ninmod) & = \max_{\mu \, : \, \rhoinmod \in \setN_{\Ninmod}}\chi(\chG,\mu),\\
		\chi(\chG,\mu) & = S(\chG[\rhoinmod]) - \int{\mu(dx) \, S(\chG[\rhoX])}.
	\end{split}	
\end{equation}
Here $S(\hat{\rho}) = -\mathrm{Tr}(\hat{\rho}\log_2{\hat{\rho}})$ is the von Neumann entropy. 
The maximum is taken over all probability measures $\mu(x)$ in the whole space ${\mathcal H}$ of pure symbol states $\rhoX$ such that the average state $\rhoinmod=\int{\mu(dx) \rhoX}$ belongs to the set $\setN_{\Ninmod}$ of states which have a mean photon number not greater than $\Ninmod$. 
Since, in general, the one-shot capacity is not additive \cite{H09}, one has to take the limit in \eqref{eq:capacity}, unless additivity is explicitly proven for the given channel. The decomposition stated in Theorem~\ref{thmmapeq} implies:
\begin{corCap}\label{corCap} 
  	For a single-mode Gaussian channel $\chG$ with parameters $(\tau \ne 0, \sey >0)$, there exists a fiducial channel $\chPS$ as defined in Theorem 1, such that
	\begin{equation}\label{eq:capequal}		
		 C(\chG,\Ninmod) = C(\chPS,\Ninmod).
	\end{equation}
	\begin{proof}
		The symplectic transformation $\SUchG$ that follows $\chPS$ in Theorem~\ref{thmmapeq} does not change the entropies in $\chi$ and there is no energy constraint on the output of the channel.
		Hence, $\SUchG$ can be omitted. Furthermore, the rotation $\SrotchG$ preceding $\chPS$ in Theorem~\ref{thmmapeq} may be regarded as a change of the reference phase that can be chosen arbitrarily; therefore, $\SrotchG$ can be omitted as well. 
		Thus, $C_\chi(\chG,\Ninmod)=C_\chi(\chPS,\Ninmod)$ holds. 
In order to evaluate the one-shot capacity of $\chG^{\otimes n}$ we apply the same reasoning, where the preceding and following transformations are given by $\oplus_{i=1}^n\SUchG$ and $\oplus_{i=1}^n \SrotchG$, respectively. Hence, it follows that $C_\chi({\chG}^{\otimes n},n\Ninmod)=C_\chi(({\chPS})^{\otimes n},n\Ninmod)$ which together with Eq.~\eqref{eq:capacity} implies Eq.~\eqref{eq:capequal}. Note that despite Eq.~\eqref{eq:capequal} requires $\tau\ne 0$ and $\sey >0$, it can be easily extended to the general case \cite{SUPPMAT}. 
	\end{proof}
\end{corCap}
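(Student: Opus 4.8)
The plan is to reduce the capacity of $\chG$ to that of $\chPS$ by peeling off, one at a time, the two extra operations that Theorem~\ref{thmmapeq} attaches to the fiducial channel. First I would lift Eq.~\eqref{eq:mapeqPI} from the level of covariance matrices to the level of quantum maps: a Gaussian channel with zero displacement is completely determined by $(\X,\eff)$ and a Gaussian unitary by its symplectic matrix, so Eq.~\eqref{eq:mapeqPI} is equivalent to the operator identity $\chG = \Utwo \circ \chPS_{(\tau,\sey,s)} \circ \Uone$, where $\Uone$ is the Gaussian unitary with symplectic action $\SrotchG$ and $\Utwo$ the one with symplectic action $\SUchG$ (recall $\bm{\delta}$ was set to zero, which does not affect the capacity). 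I would then show separately that $\Utwo$ and $\Uone$ may each be dropped from $C_\chi$.

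For the trailing unitary $\Utwo$: the one-shot quantity in Eq.~\eqref{eq:chi} depends on the overall channel only through output von Neumann entropies, and $S(\Utwo[\hat{\rho}])=S(\hat{\rho})$ for every state; moreover there is no energy constraint on the channel's output, so the admissible set $\setN_{\Ninmod}$, which restricts inputs only, is unaffected. Hence $C_\chi$ is unchanged when $\chG$ is replaced by $\chPS_{(\tau,\sey,s)}\circ\Uone$. For the leading unitary $\Uone$: its symplectic action $\SrotchG$ is a phase-space rotation, i.e.\ a phase shift $e^{-i\theta\hat{n}}$, which commutes with the number operator and hence preserves the mean photon number. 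Consequently $\Uone$ maps $\setN_{\Ninmod}$ bijectively onto itself, and the change of variables $\rhoX \mapsto \Uone[\rhoX]$ turns the maximization in Eq.~\eqref{eq:chi} for $\chPS_{(\tau,\sey,s)}\circ\Uone$ into the one for $\chPS_{(\tau,\sey,s)}$ with the same value; equivalently, $\Uone$ amounts only to a choice of reference phase. Together these give $C_\chi(\chG,\Ninmod)=C_\chi(\chPS,\Ninmod)$.

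Finally I would pass to the regularized capacity Eq.~\eqref{eq:capacity}. Applying Theorem~\ref{thmmapeq} mode by mode gives $\chG^{\otimes n}=\big(\bigoplus_{i=1}^{n}\Utwo\big)\circ\chPS^{\otimes n}\circ\big(\bigoplus_{i=1}^{n}\Uone\big)$; a tensor product of passive phase shifts is again passive (number-conserving), and a tensor product of the trailing unitaries is again entropy-preserving with no output constraint, so the two-step argument applied to $n$ uses yields $C_\chi(\chG^{\otimes n},n\Ninmod)=C_\chi(\chPS^{\otimes n},n\Ninmod)$ for every $n$; dividing by $n$ and letting $n\to\infty$ gives $C(\chG,\Ninmod)=C(\chPS,\Ninmod)$. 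The step I expect to be the main obstacle is the treatment of the leading rotation: one must check carefully that $\Uone$ neither relaxes nor tightens the energy constraint — that it preserves the mean photon number and therefore stabilizes $\setN_{\Ninmod}$, and that the induced reparametrization of input ensembles is a genuine bijection — whereas everything else is essentially bookkeeping. The excluded cases $\tau=0$ or $\sey=0$, where Theorem~\ref{thmmapeq} does not literally apply, would be recovered from the rank-deficient version of the decomposition.
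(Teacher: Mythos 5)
Your proposal is correct and follows essentially the same route as the paper's proof: drop the trailing Gaussian unitary because the Holevo quantity depends only on output entropies (which are unitarily invariant) and there is no output energy constraint, drop the leading phase-space rotation because it preserves the mean photon number and hence merely reparametrizes the constrained input ensembles, then repeat the argument for $\chG^{\otimes n}$ with the direct sums $\oplus_{i=1}^n\SUchG$ and $\oplus_{i=1}^n\SrotchG$ and pass to the regularized limit. The only difference is that you spell out explicitly the point the paper states tersely — that the rotation stabilizes $\setN_{\Ninmod}$ via a bijective change of variables — which is a welcome clarification rather than a deviation.
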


We remark that if the corresponding fiducial channel 
$\chPS_{(\tau,\sey,s)}$ is entanglement breaking, then the one-shot capacities of both $\chPS_{(\tau,y,s)}$ and $\chG$ are additive
\cite{S02HS041,S02HS042}, and using Corollary~\ref{corCap} it follows that $C(\chG,\Ninmod)=C_\chi(\chPS,\Ninmod)$.
\section{Gaussian capacities}For experimental implementations and analytical calculations, it is convenient to focus on Gaussian encodings. We call the capacity restricted to Gaussian encodings the Gaussian capacity $\CapG$ \cite{LPM09,SDKC09,PLM09,SKC10,SKC11,SKC12}:
\begin{equation}\label{eq:capgaussdef}
	\begin{split}
		\CapG(\chG,\Ninmod) & = \lim_{n \to \infty}\frac{1}{n}\CapGchi(\chG^{\otimes n},n\Ninmod),\\
		\CapGchi(\chG,\Ninmod) & = \max_{\muG \, : \, \rhoinmodG \in {\setN}^{G}_{\Ninmod}}\chi(\chG,\muG),
	\end{split}			
\end{equation}
where $\CapGchi(\chG,\Ninmod)$ is the one-shot Gaussian capacity. The maximum is now taken over all probability measures $\muG(\disp,\CM)$ on Gaussian symbol states $\rhoG(\disp,\CM)$ such that $\rhoinmodG(\dispinmod,\CMinmod) = \int \muG(d \disp,d \CM) \rhoG(\disp,\CM)$ is in the set $\setN^{G}_{\Ninmod}$ of Gaussian states with a mean photon number not greater than $\Ninmod$. Unlike previous works (e.g., \cite{H05,GaussRev2}), we require the individual symbol states as well as the average state to be Gaussian. Then we prove that the one-shot Gaussian capacity of an arbitrary single-mode Gaussian channel $\chG$ is given by the well-known expression \cite{EW07} (see \cite{SUPPMAT} for the proof)
\begin{equation}\label{eq:capgchi}		
	\begin{split}
			\CapGchi(\chG,\Ninmod) & = \max_{\CMin,\CMmod}\{\chiG(\seoutmod,\seout) \, | \, \Tr[\CMin+\CMmod] \le 2\Ninmod+1\},\\
			\chiG(\seoutmod,\seout) & = g\left(\seoutmod-\frac{1}{2} \right)-g\left(\seout-\frac{1}{2} \right),\\
			g(x) & = (x+1)\log_2(x+1)-x\log_2 x,
	\end{split}
\end{equation}
where $\CMin$ is the CM of a pure Gaussian input state $\rhoG(0,\CMin)$ satisfying $\det{(2\CMin)}=1$. Here $\CMmod$ is the CM of a classical Gaussian distribution according to which the input state is displaced in order to generate the modulated input state $\rhoinmodG(0,\CMinmod)$ with CM $\CMinmod=\CMin+\CMmod$ satisfying $\Tr[\CMinmod] \le 2\Ninmod+1$. Furthermore,  $\seout=\sqrt{\det \CMout}$ and $\seoutmod=\sqrt{\det \CMoutmod}$ are the symplectic eigenvalues of the output and modulated output states with CM $\CMout=\chG(\CMin)$ and $\CMoutmod=\chG(\CMinmod)$, respectively (see \cite{SUPPMAT}).

The one-shot Gaussian capacity is equal to the Gaussian capacity, i.e., $\CapG(\chG,\Ninmod)=\CapGchi(\chG,\Ninmod)$, provided it is additive. Interestingly, such an additivity can be proven if the input energy exceeds some threshold $\Nthr$ (see \cite{SUPPMAT}). Note that \cite{H05} also derives additivity but for a slightly different definition of $\CapGchi$ and without respecting the energy constraint. In addition, an analog of Corollary~\ref{corCap} can easily be shown to hold for Gaussian capacities, namely $\CapG(\chG,\Ninmod) = \CapG(\chPS,\Ninmod)$. Therefore, using the fiducial channel $\chPS$, we can analytically find the Gaussian capacity of any Gaussian channel in this high-energy regime:
\begin{corCapG}\label{corCapG} 
	For a single-mode Gaussian channel $\chG$ with parameters $(\tau \ne 0, \sey >0)$, there exists a fiducial channel $\chPS$ as defined in Theorem 1, such that
	\begin{eqnarray}
		& & \CapG(\chG,\Ninmod) = \CapG(\chPS_{(\tau,\sey,s)},\Ninmod)\nonumber\\
		& & =  g\left(|\tau|\Ninmod+\sey\cosh(2s)+\frac{|\tau|-1}{2} \right) - g\left(\sey+\frac{|\tau|-1}{2}\right),%
		\nonumber\\
		& & \mathrm{if~~} \Ninmod \ge \Nthr = \frac{1}{2} \left( e^{2|s|}+\frac{2\sey}{|\tau|} \sinh(2|s|) - 1 \right).\label{eq:CapGthr}
	\end{eqnarray}
\end{corCapG}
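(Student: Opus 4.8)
The plan is to reduce Corollary~\ref{corCapG} to the optimization problem \eqref{eq:capgchi} applied to the fiducial channel $\chPS_{(\tau,\sey,s)}$, using the already-established facts that $\CapG(\chG,\Ninmod) = \CapG(\chPS,\Ninmod)$ and that $\CapG = \CapGchi$ whenever the one-shot quantity is additive, which holds for $\Ninmod \ge \Nthr$. So everything comes down to (i) solving $\max_{\CMin,\CMmod}\chiG(\seoutmod,\seout)$ under $\Tr[\CMin+\CMmod]\le 2\Ninmod+1$ for the specific matrices $\XPS(\tau),\effPS(y,s)$ of \eqref{eq:chPS}, and (ii) pinning down the energy threshold above which the resulting critical point is the true maximum and the additivity proof applies.

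For step (i), first I would exploit that $g$ is monotonically increasing, so $\chiG(\seoutmod,\seout)$ increases in $\seoutmod$ and decreases in $\seout$. The output symplectic eigenvalue of a pure input with $\det(2\CMin)=1$ through $\chPS$ is $\seout = \sqrt{\det(\XPS\CMin\XPS^{\mathsf T}+\effPS)}$; since $\XPS = \sqrt{|\tau|}\,\diag(1,\pm1)$ acts conformally up to sign, $\det(\XPS\CMin\XPS^{\mathsf T}) = \tau^2\det\CMin = \tau^2/4$, and one checks that $\seout$ is minimized by aligning the squeezing of $\CMin$ against that of $\effPS$. The natural ansatz is therefore $\CMin = \tfrac12\diag(e^{-2r},e^{2r})$ and $\CMmod = \diag(a_1,a_2)$ diagonal in the same basis (this diagonal/aligned structure is what the additivity analysis in \cite{SUPPMAT} should justify as optimal). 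Then $\seout$ and $\seoutmod$ become explicit functions of $r$ and of $N_1+N_2$ where $N_i = a_i/2$-type combinations, and I would carry out the Lagrange/KKT analysis: at the optimum the full energy budget is saturated, $\Tr\CMinmod = 2\Ninmod+1$, and the derivative conditions force the modulation to be distributed so that the modulated output stays ``effectively thermal,'' i.e. $\seoutmod = |\tau|\Ninmod + \sey\cosh(2s) + (|\tau|-1)/2$ and $\seout = \sey + (|\tau|-1)/2$, yielding exactly the stated expression $g(\seoutmod-\tfrac12) - g(\seout-\tfrac12)$ with the $-\tfrac12$'s absorbed into the shifts.

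For step (ii), the threshold $\Nthr = \tfrac12(e^{2|s|} + (2\sey/|\tau|)\sinh(2|s|) - 1)$ is precisely the energy at which the optimal input squeezing parameter $r$ hits its boundary value (the value for which $\CMin$ exactly counter-squeezes $\effPS$ after transmission, so that the output is isotropic): below $\Nthr$ one cannot afford enough photons to both squeeze the input optimally and still modulate, so the unconstrained critical point leaves the feasible region and the solution changes character; above $\Nthr$ the critical point is interior-feasible and, by concavity of the reduced objective there together with the additivity result, it is the global maximum of $\CapGchi$ and equals $\CapG$. I would verify that plugging $r = r_{\max}$ into $\Tr\CMinmod \le 2\Ninmod+1$ reproduces exactly the inequality $\Ninmod \ge \Nthr$, and invoke the Supplemental Material for the additivity statement in this regime.

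The main obstacle I anticipate is step (ii) — rigorously controlling the constrained optimization globally rather than just finding the critical point. Checking that the proposed diagonal, phase-aligned configuration genuinely maximizes $\chiG$ (and is not merely a stationary point), and that for $\Ninmod \ge \Nthr$ this configuration is feasible and optimal while the additivity proof of \cite{SUPPMAT} kicks in at exactly the same threshold, is the delicate part; the algebra producing the final $g(\cdot)-g(\cdot)$ formula, by contrast, is routine once the ansatz is in place.
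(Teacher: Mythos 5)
Your overall reduction (pass to $\chPS$, then evaluate the single-letter quantity) is the right frame, but your plan diverges from the paper's actual argument in a way that leaves a genuine gap. You invoke ``additivity holds for $\Ninmod \ge \Nthr$'' as an already-established fact to be imported from the Supplemental Material; in the paper, however, that additivity statement \emph{is} the core content of this corollary's proof --- there is no independent prior result you can lean on, so your proposal assumes the hardest step. The paper proves it as follows: using the Bloch--Messiah decomposition, any pure $n$-mode Gaussian input is a product of squeezed vacua up to interferometers; the input interferometer dies on the vacuum, and the output one commutes through $\XPI$ (since $\XPI$ is proportional to an isometry up to $\sigma_z$) without affecting $\effPI$, so the Gaussian minimum output entropy of $(\chPI)^{\otimes n}$ is additive and attained by vacuum. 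Writing $\chPS = \Ssq\circ\chPI\circ\Ssq^{-1}$ then shows the same for $(\chPS)^{\otimes n}$, with optimal input $\CMin=\Ssq^2/2$. Note this additivity is proven \emph{unconditionally}; the threshold does not enter here.

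The second divergence concerns how the formula is extracted. You propose a Lagrange/KKT analysis of the constrained single-letter problem on a diagonal ansatz, and you correctly flag that certifying global optimality of that ansatz is delicate --- indeed, that is exactly the difficulty the paper avoids. Instead of optimizing directly, the paper sandwiches: $\CapG(\chPS,\Ninmod) \le \max S(\chPS(\CMin+\CMmod)) - \lim_n \tfrac1n\min S((\chPS)^{\otimes n}(\CMin))$, where the first term is single-letter by subadditivity and bounded by the thermal-output entropy $g(|\tau|\Ninmod + y\cosh(2s) + (|\tau|-1)/2)$, and the second is the additive minimum just computed. The threshold $\Nthr$ then appears not as a KKT boundary condition but as the physicality condition $\CMmod = \CMinmod - \CMin \ge 0$ for the explicit encoding $\CMin=\Ssq^2/2$, $\CMinmod = \diag\bigl(\Ninmod+\tfrac12-\tfrac{y}{|\tau|}\sinh 2s,\;\Ninmod+\tfrac12+\tfrac{y}{|\tau|}\sinh 2s\bigr)$ that saturates \emph{both} terms of the bound simultaneously (and this condition does reproduce your $\Nthr$ exactly, so your intuition about its origin --- not enough photons to both counter-squeeze and modulate --- is correct). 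To repair your proposal you would need either to supply the additivity proof yourself or to replace the KKT step by this upper-bound-plus-achievability argument, which handles arbitrary (non-diagonal, entangled, multi-mode) encodings at once.
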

The proof is presented in \cite{SUPPMAT}. Note that the energy threshold $\Nthr$ depends on the parameter $s$ characterizing the fiducial channel $\chPS_{(\tau,\sey,s)}$. For thermal channels $\chPI = \chPS_{(\tau,y,0)}$, the threshold $\Nthr=0$, so that additivity holds in the entire energy range. Then, Eq.~\eqref{eq:CapGthr} coincides with previously derived expressions for particular cases \cite{Channels1,HW01}. In Fig.~\ref{fig:entbreak}, we illustrate an example of the domain where $\Ninmod \ge \Nthr$, i.e. Eq.~\eqref{eq:CapGthr} holds. Note, that Eq.~\eqref{eq:CapGthr} becomes the actual capacity $C(\chG,\Ninmod)$ (for $\Ninmod \ge \Nthr$) of an arbitrary single-mode Gaussian channel $\chG$ provided that the vacuum state is proven to minimize the output entropy of a single use of an ideal amplification channel \cite{GNLSC12,GC2013}.
\section{Upper bounds}Recently, upper bounds have been derived on the capacity of phase-insensitive channels, i.e. $\chPI$ with $\tau \ge 0$ \cite{KS13,GLMS12}. 
Using Corollary~\ref{corCapG}, we can generalize them to any Gaussian channel in the high-energy regime:
\begin{corCapBound}\label{corCapGBounds} 
	For a single-mode Gaussian channel $\chG$ with parameters $(\tau > 0, \sey>0)$ and $\Ninmod \ge \Nthr$, 
	\begin{equation}\label{eq:CapGBound}
		\begin{split}
			\CapG(\chG,\Ninmod) & \le C(\chG,\Ninmod) \le \overline{C} \le \CapG(\chG,\Ninmod) + \frac{1}{\ln 2},\\
			\overline{C} & = g\left(\frac{2\tau \Ninmod + (2\sey+1-\tau)\sinh^2{s}}{2\sey+1+\tau} \right),
		\end{split} 
	\end{equation}
	where $\CapG(\chG,\Ninmod)$ is stated in Eq.~\eqref{eq:CapGthr}.
	\begin{proof}
		The fiducial channel corresponding to $\chG$ can be decomposed as $\chPS_{(\tau,\sey,s)} = \chPS_{\left(G,\frac{G-1}{2},s\right)} \circ \chPS_{\left(T,\frac{1-T}{2},s\right)}$  with $T=2\tau/(2\sey+\tau+1)$ [see Fig.~\ref{fig:graph_ad} and Table~\ref{table:channels}]. Then, the capacity of $\chPS_{(\tau,\sey,s)}$ is upper bounded by the capacity of the first channel, i.e.
		\begin{equation*}
			C(\chG,\Ninmod) = C\left(\chPS_{\left(\tau,\sey,s\right)},\Ninmod\right) \le C\left(\chPS_{\left(T,\frac{1-T}{2},s\right)},\Ninmod\right) \le \overline{C}, 
		\end{equation*}
		where $\overline{C}=g(T\Ninmod + (1-T)\sinh^2{s})$ \cite{LPM09}. We define
		\begin{equation*}
			\Delta(s) \equiv \overline{C} - \CapG = g\bigl[A(B+1)^{-1}\bigr] - g(A+B\cosh^2{s}) + g(B),
		\end{equation*}
		where $A=\tau\Ninmod+\bigl[y-\frac{(\tau-1)}{2}\bigr]\sinh^2{s}$ and $B = y+\frac{\tau-1}{2}$. It was shown in \cite{KS13} that $\Delta(0) < 1/\ln 2$. Since $\forall s$ : $\Delta(s) \le \Delta(0)$, the corollary is proven.
	\end{proof}
\end{corCapBound}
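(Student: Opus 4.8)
The plan is to transfer the whole chain of inequalities to the fiducial channel, exploit a two-step cascade decomposition of it, and then bound the gap between the two outer expressions. First I would apply Corollary~\ref{corCapG} together with the identity $C(\chG,\Ninmod)=C(\chPS_{(\tau,\sey,s)},\Ninmod)$ from Corollary~\ref{corCap} to replace $\chG$ by its fiducial channel $\chPS_{(\tau,\sey,s)}$ in every term. The leftmost inequality $\CapG(\chG,\Ninmod)\le C(\chG,\Ninmod)$ is then immediate, since Gaussian encodings form a subset of the encodings over which the maximization in Eq.~\eqref{eq:chi} is taken, so restricting the maximization can only lower its value.

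For the bound $C(\chG,\Ninmod)\le\overline C$ I would use the decomposition $\chPS_{(\tau,\sey,s)}=\chPS_{(G,(G-1)/2,s)}\circ\chPS_{(T,(1-T)/2,s)}$ with $T=2\tau/(2\sey+\tau+1)\in(0,1)$, which follows from the physical realization in Fig.~\ref{fig:graph_ad} and Table~\ref{table:channels} together with the composition rule for the matrices $\X$ and $\eff$; the inner channel is an attenuator with squeezed-vacuum environment and the outer one an amplifier with squeezed environment. Since appending a channel cannot increase the capacity, $C(\chPS_{(\tau,\sey,s)},\Ninmod)\le C(\chPS_{(T,(1-T)/2,s)},\Ninmod)$. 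For the squeezed-noise attenuator, the input energy constraint caps the output mean photon number at $T\Ninmod+(1-T)\sinh^2 s$, and since the von Neumann entropy is subadditive the regularized capacity is at most the maximal single-use output entropy, which by Gaussian extremality equals $g$ evaluated at that photon number; this reproduces the bound $\overline C=g\bigl(T\Ninmod+(1-T)\sinh^2 s\bigr)$ of \cite{LPM09}, and substituting $T$ yields the stated form of $\overline C$.

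It then remains to prove $\overline C\le\CapG(\chG,\Ninmod)+1/\ln 2$. I would study the difference $\Delta(s)\equiv\overline C-\CapG(\chG,\Ninmod)$: with $A=\tau\Ninmod+[\sey-(\tau-1)/2]\sinh^2 s$ and $B=\sey+(\tau-1)/2$, a short rearrangement of $\overline C$ and of Eq.~\eqref{eq:CapGthr} gives $\Delta(s)=g\bigl[A(B+1)^{-1}\bigr]-g(A+B\cosh^2 s)+g(B)$. At $s=0$ this is exactly the phase-insensitive gap analyzed in \cite{KS13}, where $\Delta(0)<1/\ln 2$ is established, so it suffices to show $\Delta(s)\le\Delta(0)$ for all $s$, i.e. that squeezing the added noise can only shrink the gap. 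Chaining the three bounds then yields the corollary.

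I expect the monotonicity $\Delta(s)\le\Delta(0)$ to be the main obstacle: one must differentiate $\Delta$ with respect to $\sinh^2 s$ and show the derivative is nonpositive over the entire admissible range $\tau>0$, $\sey>0$, $\Ninmod\ge\Nthr$, relying on the monotonicity and concavity of $g$ and on the fact that the three arguments of $g$ grow with $\sinh^2 s$ at rates balanced by the factor $(B+1)^{-1}$; the sign of the derivative is not manifest and requires care. The bound $\Delta(0)<1/\ln 2$ itself is a nontrivial but already-available input from \cite{KS13}.
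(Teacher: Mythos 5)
Your proposal is correct and follows essentially the same route as the paper's proof: the same cascade decomposition $\chPS_{(\tau,\sey,s)} = \chPS_{(G,\frac{G-1}{2},s)} \circ \chPS_{(T,\frac{1-T}{2},s)}$ with $T=2\tau/(2\sey+\tau+1)$, the same bound $\overline{C}=g(T\Ninmod+(1-T)\sinh^2 s)$ from \cite{LPM09}, and the same reduction of the gap to $\Delta(s)\le\Delta(0)<1/\ln 2$ via \cite{KS13}. The monotonicity $\Delta(s)\le\Delta(0)$ that you flag as the main remaining obstacle is likewise only asserted, not derived, in the paper's own proof, so your treatment matches it in both structure and level of rigor.
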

Note that for $\tau < 0$ we can state a similar upper bound on the capacity, $C(\chG,\Ninmod) \le \overline{C}$, where $\overline{C}$ is given by Eq.~\eqref{eq:CapGBound} with the replacement $y \to -y$. However, in this case the last inequality in Eq.~\eqref{eq:CapGBound} does not hold. In a similar fashion, we extend in \cite{SUPPMAT} the bounds that were derived in \cite{GLMS12}.
\section{Conclusions}We have shown that an arbitrary single-mode Gaussian channel is either equivalent to a newly defined fiducial channel preceded by a phase shift and followed by a Gaussian unitary, or can be obtained in a proper limit of this combination. This equivalence was exploited to reduce the energy-constrained classical capacity of any single-mode Gaussian channel to that of the fiducial channel. We gave an analytical expression for the Gaussian capacity above the energy threshold, where additivity can be proven, and showed that in this case the classical capacity cannot exceed it by more than $1/\ln 2$ bits. We expect that our results will be useful for further studies on the capacities of Gaussian channels, especially for input energies below the energy threshold.

J.S. is grateful to Vittorio Giovannetti for clarifications on his work and acknowledges a financial support from the Belgian FRIA foundation. The authors also acknowledge financial support from the F.R.S.-FNRS under the Eranet project HIPERCOM, from the Interuniversity Attraction Poles program of the Belgian Science Policy Office under Grant No. IAP P7-35 ``Photonics$@$be'', from the Brussels Capital Region under the project CRYPTASC, from the ULB under the program ``Ouvertures internationales'', and from the Alexander von Humboldt Foundation.

\begin{widetext}
\section{Supplemental Material}
The quadrature operators $(\hat{q},\hat{p})$ of the quantized electromagnetic field mode are related to the annihilation and creation operator $\hat{a}$ and $\hat{a}^\dagger$, respectively, according to $\hat{q}=(\hat{a}+\hat{a}^\dagger)/\sqrt{2}$ and $\hat{p}=i(\hat{a}^\dagger-\hat{a})/\sqrt{2}$ and satisfy the canonical commutation relation $[\hat{q},\hat{p}]=i$. For a quantum system of $n$ modes we group them as $\bm{\hat{R}}=\trans{(\hat{q}_1,\hat{p}_1,\hat{q}_2,\hat{p}_2,...,\hat{q}_n,\hat{p}_n)}$. Then, for an $n$-mode quantum state $\hat{\rho}$ the coherent vector is given by $\disp = \braket{\bm{\hat{R}}}$ and the $2n \times 2n$ covariance matrix (CM) $\CM$ has entries $V_{ik}=\frac{1}{2}\braket{\{\hat{R}_i-\braket{\hat{R}_i},\hat{R}_k-\braket{\hat{R}_k}\}}$, where $\{,\}$ is the anti-commutator and $\braket{\hat{R}_i}=\Tr[\hat{R}_i\hat{\rho}]$. 

Let $\rhoG(\disp,\CM)$ be an $n$-mode Gaussian state with coherent vector $\disp$ and CM $\CM$. Then, a Gaussian channel $\chG$ is a completely-positive trace-preserving map which is closed on the set of Gaussian states~\cite{HW01}. It transforms input states with moments $\{\dispin,\CMin\}$ to output states with moments $\{\dispout,\CMout\}$ according to
\begin{equation}
	\dispout  =\X \dispin + \bm{\delta},\quad
	\CMout  =\X \CMin \trans{\X} + \eff,
\end{equation}
where $\bm{\delta}$ is the displacement introduced by the channel, $\X$ is a $2n \times 2n$ real matrix, and $\eff$ is a $2n \times 2n$ real, symmetric, and non-negative matrix. As mentioned in the main text we choose $\bm{\delta}=0$ in what follows (since the capacity is not affected by $\bm{\delta}$), and focus on the action of the map $\chG$ on second-order moments using the simplified notation $\chG(\CMin) = \CMout$. Then, the map $\chG$ is fully characterized by matrices $\X$ and $\eff$, which must satisfy $\eff+\frac i2\left(\bm{\Omega}-\X \bm{\Omega} \trans{\X}\right) \ge 0$ \cite{HW01}, where
\begin{equation}
	\bm{\Omega}=\bigoplus_{k=1}^n \begin{pmatrix} 0 & 1\\-1 & 0 \end{pmatrix},
\end{equation}
is the symplectic form~\cite{DeGosson}. For the one-mode case $n=1$ we define the parameters 
\begin{equation}\label{eq:tauy}
	\tau=\det{\X}, \quad y=\sqrt{\det{\eff}},
\end{equation}
which have to satisfy
\begin{equation}\label{eq:XYphysical}
	y \ge \frac{|\tau-1|}{2},
\end{equation}
in order for the map to be physical.
\subsection{Physical representation of non-thermal channels $\chCS$ and $\chSQ$}
Out of the seven canonical channels (see main text) the sixth one is the \emph{classical signal} (or quadrature erasing) channel, which we denote by $\chCS$. Its action is defined by
\begin{equation}\label{eq:A2}
	\XCS = \begin{pmatrix}
		1 & 0\\0 & 0 
	\end{pmatrix}, \quad
	\effCS = \begin{pmatrix}
		y & 0\\0 & y
	\end{pmatrix},
	\quad y \ge \frac{1}{2}.
\end{equation}
This channel can be physically implemented with a continuous-variable controlled-NOT (CV-NOT) gate \cite{FCP041,FCP042}. The corresponding scheme is depicted in Fig.~\ref{fig:graph_bc} (b), where $G~=~y+1/2$. Note that $\tau=0$, implying that $\chCS$ is always entanglement breaking (see Fig.~\ref{fig:entbreak} in the main text) and therefore, its classical capacity is additive \cite{S02HS041,S02HS042}.

The seventh and last canonical channel is the \emph{single-quadrature classical noise channel}, which we name $\chSQ$. Its action is defined by the matrices
\begin{equation}\label{eq:B1}
	\XSQ = \mathbb{I}, \quad
	\effSQ = \begin{pmatrix}
		0 & 0\\0 & \frac{1}{2}
	\end{pmatrix},
\end{equation}
where $\mathbb{I}$ is the $2 \times 2$ identity matrix. Equation~\eqref{eq:B1} implies $\tau=1$ and $y=0$ (as the perfect transmission channel). The channel $\chSQ$ is not entanglement breaking \cite{H08}.

\begin{figure}[h]
	\centering
	\includegraphics[width=0.49\textwidth]{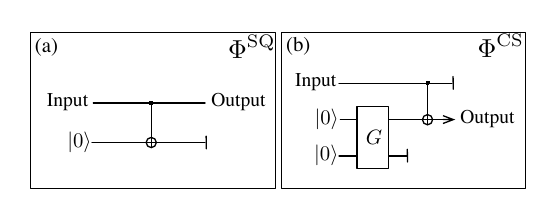}
	\caption{Realization of (a)  the classical signal channel $\chCS$ and (b) the single-quadrature classical noise channel $\chSQ$ by a beamsplitter with transmissivity $T$ and a two-mode squeezer with gain $G$. Here $\ket{0}$ stands for the vacuum state and ``$\dashv$'' denotes ``tracing out'' the mode.}
	\label{fig:graph_bc}
\end{figure}
 
In the following we explain the physical schemes of the channels $\chSQ$ and $\chCS$ as depicted in Fig.~\ref{fig:graph_bc} (a) and (b). First, we discuss the main ``building block''  of these schemes, namely the CV-CNOT gate acting on a two-mode state with CM $\CMinth \equiv \CMin \oplus \CM_{\rm th}$, consisting of a general input mode with CM $\CMin$ and an ancillary mode being in a thermal state with CM $\CM_{\rm th}$, i.e.
\begin{equation}
\CMin=
\begin{pmatrix}
	v_q 	& v_{qp}\\
	v_{qp}	& v_p
\end{pmatrix},\quad
\CM_{\rm th}=
\begin{pmatrix}
	y & 0\\
	0 & y
\end{pmatrix}.
\end{equation}
In the following the input mode corresponds to the ``control mode'' of the CV-CNOT gate, whereas the output mode is either the control or target mode depending on the channel. The action of the symplectic transformation $\Scnot$ (corresponding to the CV-CNOT gate) on the joint state reads \cite{FCP041,FCP042}
\begin{equation}\label{eq:SCNOT}
\Scnot \, \CMinth \, \trans{\Scnot} = 
\begin{pmatrix}
	1 & 0 & 0 & 0\\
	0 & 1 & 0 & -1\\ 
	1 & 0 & 1 & 0\\ 
	0 & 0 & 0 & 1
\end{pmatrix}
\begin{pmatrix}
	v_q 	& v_{qp} & 0 & 0\\
	v_{qp}	& v_p	 & 0 & 0\\
	0		& 0		 & y & 0\\
	0		& 0		 & 0 & y
\end{pmatrix}
\begin{pmatrix}
	1 & 0 & 1 & 0\\
	0 & 1 & 0 & 0\\ 
	0 & 0 & 1 & 0\\ 
	0 & -1 & 0 & 1
\end{pmatrix}
= 
\begin{pmatrix}
   v_q 	   		& v_{qp} 		& v_q 		& 0\\
   v_{qp}  		& v_p+y 		& v_{qp} 	& -y\\
   v_q			& v_{qp}		& y+v_q		& 0\\
   0			& -y			& 0			& y
\end{pmatrix}.
\end{equation}
Tracing out the target mode in Eq.~\eqref{eq:SCNOT} [as shown in Fig.~\ref{fig:graph_bc} (a)] and taking the ancillary mode to be in the vacuum state ($y=1/2$) leads to the output of the single quadrature additive noise channel $\chSQ$. In case of the channel $\chCS$ the ancillary mode with CM $\CM_{\rm th}$ is the output of the two-mode squeezer with gain $G$ [Fig.~\ref{fig:graph_bc} (b)], therefore, $y=G-1/2$. Tracing out the control mode in Eq.~\eqref{eq:SCNOT} [as shown in Fig. ~\ref{fig:graph_bc} (b)] leads to the output for the classical signal channel $\chCS$.

\subsection{Proof of Theorem~\ref{thmmapeqSup} and method how to obtain new decomposition}
\begin{figure}[h]
\centering
\includegraphics[width=0.4\textwidth]{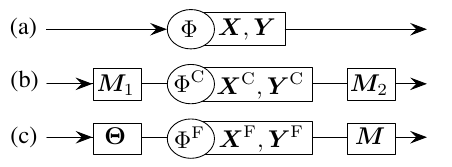}
\caption{Equivalence of (a) an arbitrary Gaussian channel $\chG$, (b) the canonical decomposition containing a canonical channel $\chH$ and (c) the decomposition in terms of the fiducial channel $\chPS$ as stated in Theorem~\ref{thmmapeqSup}.}
\label{fig:chEquiv}
\end{figure}
\begin{thmmapeqSup}\label{thmmapeqSup}	
		For a single-mode Gaussian channel $\chG$ defined by matrices $\X$ and $\eff$ with $\tau\ne 0$ and $y >0$, there exists a fiducial channel $\chPS$ defined by matrices $\XPS(\tau)=\sqrt{|\tau|}\diag(1,\sign(\tau))$, $\effPS(y,s)=y \, \diag{(e^{2s},e^{-2s})}$ with $\tau$ and $y$ obtained from Eq.~\eqref{def-tau-y} of the main text, a symplectic transformation $\SUchG$, and a rotation in phase space $\SrotchG$ such that
	\begin{equation}	
		\X = \SUchG \, \XPS(\tau) \, \SrotchG, \; \eff = \SUchG \, \effPS(y,s) \, \SUchGtrans,     \label{eq:mapeqPISup}
	\end{equation}	
		where the explicit dependencies of $\SUchG$, $\SrotchG$, and $s$ on the parameters of the channel $\chG$ are presented in Eqs.\eqref{eq:effbyeffPI}-\eqref{eq:SrotchG}.
\begin{proof}
The action of the single-mode Gaussian channel $\chG$ on an input CM $\CMin$ reads according to Eq.~\eqref{eq:chgauss}
\begin{equation}\label{eq:chGXY}
	\chG(\CMin) = \CMout =  \X \CMin \trans{\X} + \eff,
\end{equation}
where $\X$ is a real $2\times 2$ matrix and $\eff$ a real, symmetric and non-negative $2 \times 2$ matrix. In \cite{H07CGH061,H07CGH062,H07CGH063} it was stated that for any Gaussian channel $\chG$ there exists a canonical decomposition $\chG=\Utwo \circ \chH \circ \Uone$, where $\chH$ is a map belonging to one of the seven canonical types that are stated in the main text in Table~\ref{table:channels}. The corresponding action on the CM reads 
\begin{equation}\label{eq:chGcan}
	\CMout = \SUtwo(\XH \SUone \CMin \SUonetrans  \XH  + \effH)\SUtwotrans,
\end{equation}
where $\XH,\effH$ are the matrices defining the canonical channels (see Table~\ref{table:channels} in the main text) and $\SUone$, $\SUtwo$ are matrices corresponding to symplectic transformations realizing unitaries $\Uone$, $\Utwo$. In the following we obtain the new decomposition in terms of the fiducial channel as stated in the Theorem and furthermore, confirm Eq.~\eqref{eq:chGcan}. The proof is structured as follows. For given matrices $\X,\eff$ we have to distinguish three cases which depend on the ranks of $\X$ and $\eff$ and correspond to canonical decompositions for which $\chH$ is either $\chPI$, $\chSQ$ or $\chCS$. In the first case our new decomposition will contain finite squeezing operations, while for the other two cases the new decomposition is shown to be valid in a proper limit of infinite squeezing. 

For the following calculations we define the symplectic matrices corresponding to a rotation and a squeezing operation, i.e.
\begin{equation}\label{eq:SU}
	\Sgenrot(\theta)=\begin{pmatrix}
	\cos{\theta} & -\sin{\theta}\\ \sin{\theta} & \cos{\theta}
	\end{pmatrix},
\quad \Ssq(s)=\begin{pmatrix}e^{s} & 0\\0 & e^{-s}\end{pmatrix}.
\end{equation}
We sometimes omit the explicit dependence on the rotation angle or squeezing parameter. For the given CM $\eff$ there exists a rotation $\Srot_Y=\Sgenrot(\theta_Y)$, such that $\trans{\Srot}_Y \eff \Srot_Y = \diag{(y_1,y_2)}$, where $y_1, y_2 \ge 0$ are the eigenvalues of $\eff$. Since matrix $\X$ is real it has a singular value decomposition (SVD)
\begin{equation}\label{eq:XSVD}
	\X = \SrotXone \Lambda_X \bm{J} \SrotXtwo,
\end{equation} 
where $\SrotXone=\Sgenrot(\thetaXone)$, $\SrotXtwo=\Sgenrot(\thetaXtwo)$ and
\begin{equation}
	\Lambda_X = \diag{(x_1,x_2)}, \quad \bm{J} = \left\{ 
	  \begin{array}{l l}
	    \mathbb{I} & \quad \text{if $\tau \ge 0$}\\
	    \sigma_z & \quad \text{if $\tau < 0$}
	  \end{array} \right..
\end{equation}
Here, $x_1,x_2 \ge 0$ are the singular values and $\sigma_z = \diag{(1,-1)}$. Using equality $\det{\X}=\det{(\Lambda_X \bm{J})}$ and Eq.~\eqref{eq:tauy} we get $\tau=\pm x_1 x_2$ and $y=\sqrt{y_1 y_2}$. The condition on the determinants of $\X$ and $\eff$ stated in Eq.~\eqref{eq:XYphysical} allows us to exclude the following combinations of ranks because they are non-physical: $(\rank{\X},\rank{\eff}) \notin \{(0,0),(0,1),(1,0),(1,1)\}$. The physically allowed combinations of ranks therefore read $(\rank{\X},\rank{\eff}) \in \{(2,2),(0,2),(2,0),(1,2),(2,1)\}$. Our theorem corresponds to the case $(\rank{\X},\rank{\eff})=(2,2)$ which we prove at first. Then, we extend it to the second and third ``physical'' couple and finally treat the last two individually.

We begin with the case that is stated in the Theorem, i.e. $\rank{\X}=\rank{\eff}=2$. The latter implies that $x_1,x_2,y_1,y_2 \ne 0$. Then we can construct the squeezing operation $\Ssq_Y=\Ssq(s_Y)$, with $s_Y = \frac{1}{4}\ln{(y_1/y_2)}$ such that $\Ssq_Y^{-1}\diag{(y_1,y_2)}\Ssq_Y^{-1}=\diag{(y,y)}$. This implies that
\begin{equation}\label{eq:effbyeffPI}
	 \eff = \Srot_Y \Ssq_Y \effPI \Ssq_Y \trans{\Srot}_Y = y\Srot_Y \Ssq_Y^2 \trans{\Srot}_Y, \quad \effPI = \diag{(y,y)}.
\end{equation}
Here the symplectic transformation $(\Srot_Y \Ssq_Y)^{-1}$ realizes the symplectic diagonalization of $\eff$, where $y$ is the symplectic eigenvalue. Furthermore, we can define a squeezing operation $\Ssq_X=\Ssq(s_X)$, with $s_X = \frac{1}{2}\ln{(x_1/x_2)}$, such that Eq.~\eqref{eq:XSVD} can be written as
\begin{equation}\label{eq:XbyXPI}
	\X = \SrotXone \Ssq_X \XPI \SrotXtwo, \quad \XPI = \begin{pmatrix}
		\sqrt{|\tau|} & 0\\0 & \sign(\tau) \sqrt{|\tau|} 
	\end{pmatrix}.
\end{equation}
Notice that the matrix $\XPI$ has the property that 
\begin{equation}\label{eq:XPIrot}
	\XPI\Sgenrot(\theta) = \Sgenrot(\sign{(\tau)}\theta)\XPI.
\end{equation}
Now we obtain the decomposition $\eff = \SUchG  \effPS \SUchGtrans$ in the following way. We define
\begin{equation}\label{eq:SUchG}
	\SUchG = \SrotXone \Ssq_X \SrotPStrans,
\end{equation}
where $\SrotPS$ will be determined in the following. Then, we multiply $\eff$ in Eq.~\eqref{eq:effbyeffPI} from both sides with the identity matrix $\mathbb{I}=\SUchG \SrotPS \Ssq^{-1}_X \SrotXonetrans$,
\begin{equation}\label{eq:YPSderive}
	\mathbb{I}\eff\mathbb{I} = y \SUchG \SrotPS \Ssq^{-1}_X \SrotXonetrans \Srot_Y \Ssq^2_Y \trans{\Srot}_Y \SrotXone \Ssq^{-1}_X \SrotPStrans \SUchGtrans.
\end{equation}
Now we define
\begin{equation}\label{eq:effPS}
	\effPS = y\SrotPS \Ssq^{-1}_X \SrotXonetrans \Srot_Y \Ssq^2_Y \trans{\Srot}_Y \SrotXone \Ssq^{-1}_X \SrotPStrans,
\end{equation}
and thus, obtain the desired decomposition $\eff = \SUchG  \effPS \SUchGtrans$. Moreover, we chose the rotation $\SrotPS$ in a way such that matrix $\effPS$ is diagonal, i.e.
$\effPS	= y \, \diag{(e^{2s},e^{-2s})}$. This implies the following expression for the squeezing parameter $s$
\begin{equation}\label{eq:YrotPS}
	\begin{split}
		s & = \frac{1}{2}\ln\left[\frac{1}{4}e^{-2(s_X+s_Y)}(\xi-\sqrt{-16e^{4(s_X+s_Y)}+\xi^2})\right],\\
		\xi & = (1+e^{4s_Y})(1+e^{4s_X})-(-1+e^{4s_Y})(-1+e^{4s_X})\cos(2(\theta_Y-\thetaXone)).		
	\end{split}
\end{equation}
The angle $\theta_{\rm F}$ of rotation $\SrotPS=\Sgenrot(\theta_{\rm F})$ reads
\begin{equation}
	\theta_{\rm F} = -\mathrm{arcsin}\left(\frac{\sign{(\lambda)}}{\sqrt{1+\lambda^2}}\right),
\end{equation}
where
\begin{equation}
	\begin{split}
		\lambda & = -\frac{e^{-2s_X}(\tilde{\xi} + \sqrt{-16e^{4(s_X+s_Y)}+\xi^2})}{2\sin(2(\theta_Y-\thetaXone))(-1+e^{4s_Y})},\\
		\tilde{\xi} & = (1+e^{4s_Y})(-1+e^{4s_X})-(-1+e^{4s_Y})(1+e^{4s_X})\cos(2(\theta_Y-\thetaXone)).		
	\end{split}
\end{equation}
Using definition $\XPS=\sqrt{|\tau|}\diag(1,\sign(\tau))$ and Eq.~\eqref{eq:XPIrot} one can rewrite Eq.~\eqref{eq:XbyXPI} as
\begin{equation}\label{eq:XPSderive}
	\X = \SrotXone \Ssq_X \SrotPStrans \SrotPS \XPS \SrotXtwo = \SrotXone \Ssq_X \SrotPStrans \XPS \Sgenrot(\sign{(\tau)}\theta_{\rm F}) \SrotXtwo = \SUchG \XPS \SrotchG,
\end{equation}
where
\begin{equation}\label{eq:SrotchG}
	 \SrotchG = \Sgenrot(\sign{(\tau)}\theta_{\rm F}+\thetaXtwo).
\end{equation}
In summary, we found matrices $\SUchG$, $\SrotchG$ and the explicit parameters of $\XPS$ and $\effPS$ such that
\begin{equation}\label{eq:chPIbychPS}
	\X = \SUchG \XPS \SrotchG, \quad \eff = \SUchG  \effPS \SUchGtrans,
\end{equation}
and thus, we have proven the Theorem. 
\end{proof}
\end{thmmapeqSup}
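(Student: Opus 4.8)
\emph{Proof proposal.}
The plan is to construct the decomposition by hand, treating $\eff$ and $\X$ separately and coupling them only through one auxiliary rotation whose angle I fix at the very end. The hypotheses $\tau\neq0$ and $y>0$ are equivalent to $\rank{\eff}=\rank{\X}=2$, so both matrices are invertible and every squeezing introduced below is well defined.

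First I would reduce $\eff$ to $y\,\mathbb{I}$ by a symplectic congruence: since $\eff$ is real, symmetric and positive definite, a rotation $\Srot_Y$ diagonalizes it as $\trans{\Srot}_Y\eff\Srot_Y=\diag(y_1,y_2)$ with $y_1,y_2>0$ and $y_1y_2=y^2$, and the balanced squeezing $\Ssq_Y=\Ssq(s_Y)$ with $s_Y=\tfrac{1}{4}\ln(y_1/y_2)$ then gives $\Ssq_Y^{-1}\diag(y_1,y_2)\Ssq_Y^{-1}=y\,\mathbb{I}$, so $\eff=y\,\Srot_Y\Ssq_Y^2\trans{\Srot}_Y$. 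In parallel I would take the singular value decomposition of the real matrix $\X$, $\X=\SrotXone\,\diag(x_1,x_2)\,\bm{J}\,\SrotXtwo$ with $\SrotXone,\SrotXtwo$ rotations, $x_1,x_2>0$, and $\bm{J}=\mathbb{I}$ or $\bm{J}=\diag(1,-1)$ according to the sign of $\tau$; absorbing $\Ssq_X=\Ssq(s_X)$ with $s_X=\tfrac{1}{2}\ln(x_1/x_2)$ recasts it as $\X=\SrotXone\Ssq_X\,\XPS\,\SrotXtwo$ with $\XPS=\sqrt{|\tau|}\,\diag(1,\sign(\tau))$, using $|\tau|=x_1x_2$.

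Next I would introduce an undetermined rotation $\SrotPS=\Sgenrot(\theta_{\rm F})$ and set $\SUchG=\SrotXone\Ssq_X\SrotPStrans$. Inserting $\mathbb{I}=\SUchG\SrotPS\Ssq_X^{-1}\SrotXonetrans$ on both sides of $\eff=y\,\Srot_Y\Ssq_Y^2\trans{\Srot}_Y$ turns it into $\eff=\SUchG\,\effPS\,\SUchGtrans$ with $\effPS=y\,\SrotPS\,\bm{W}\,\SrotPStrans$, where $\bm{W}=\Ssq_X^{-1}\SrotXonetrans\Srot_Y\Ssq_Y^2\trans{\Srot}_Y\SrotXone\Ssq_X^{-1}$ is a fixed symmetric positive-definite matrix with $\det\bm{W}=1$, being a product of rotations and squeezings. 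By the spectral theorem there is a rotation $\SrotPS$ that diagonalizes $\bm{W}$, and since its eigenvalues multiply to $1$ they may be written $e^{2s}$ and $e^{-2s}$; hence $\effPS=y\,\diag(e^{2s},e^{-2s})$ has exactly the fiducial form, with $s$ and $\theta_{\rm F}$ expressible through $\Srot_Y,\SrotXone,\Ssq_X,\Ssq_Y$. With $\SrotPS$ now fixed I would feed it into the $\X$-decomposition, $\X=\SrotXone\Ssq_X\SrotPStrans\,\SrotPS\,\XPS\,\SrotXtwo$, and push $\SrotPS$ past $\XPS$ using $\SrotPS\,\XPS=\XPS\,\Sgenrot(\sign(\tau)\theta_{\rm F})$ (which follows from $\XPS\Sgenrot(\theta)=\Sgenrot(\sign(\tau)\theta)\XPS$ and $\sign(\tau)^2=1$); this yields $\X=\SUchG\,\XPS\,\SrotchG$ with $\SrotchG=\Sgenrot(\sign(\tau)\theta_{\rm F})\,\SrotXtwo$, again a rotation. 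Both identities of the theorem are then in hand.

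The step that carries the real content, as opposed to bookkeeping, is verifying that a single rotation can serve both decompositions simultaneously. On the $\X$ side this is free of charge: any angle of $\SrotPS$ is admissible and only changes $\SrotchG$, so no compatibility equation is generated there. The only genuine requirement is on the $\eff$ side --- producing a rotation that diagonalizes $\bm{W}$ so that $\effPS$ is truly of the form $y\,\diag(e^{2s},e^{-2s})$ --- and here existence is immediate from the spectral theorem, while the fact that the diagonal entries are reciprocal is forced by $\det\bm{W}=1$, itself a restatement of $y=\sqrt{\det\eff}$ together with $\det\SUchG=1$; this is the place I would be most careful. Extracting the closed-form expressions for $\theta_{\rm F}$ and $s$ is then a routine if tedious trigonometric computation. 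Finally, the assumptions $\tau\neq0$, $y>0$ are used only to keep the singular value decomposition and the squeezings $\Ssq_X,\Ssq_Y$ well defined; the rank-deficient channels are recovered from this construction by taking suitable limits, to be handled case by case.
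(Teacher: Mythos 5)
Your proposal is correct and follows essentially the same route as the paper: symplectic diagonalization of $\eff$ via $\Srot_Y\Ssq_Y$, singular value decomposition of $\X$ with the squeezer $\Ssq_X$ absorbed into $\XPS$, the ansatz $\SUchG=\SrotXone\Ssq_X\SrotPStrans$ with $\SrotPS$ fixed by diagonalizing the resulting congruence of $\eff$, and the commutation property $\XPI\Sgenrot(\theta)=\Sgenrot(\sign(\tau)\theta)\XPI$ to absorb $\SrotPS$ into the preceding rotation $\SrotchG$. Your observation that the unit determinant of the intermediate matrix $\bm{W}$ forces the reciprocal form $\diag(e^{2s},e^{-2s})$ makes explicit a point the paper leaves implicit, and your treatment of the degenerate ranks by limits matches the paper's extension beyond the full-rank case.
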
 

Now let us extend Theorem~\ref{thmmapeqSup} to other combinations of ranks.

\underline{$\rank{\X}=2, \rank{\eff}=0$:} Since $\eff=0$ it follows that $y=0$, which together with Eq.~\eqref{eq:XYphysical} implies that $\tau=1$. Note that the channel is unitarily equivalent to the perfect transmission channel. All relations derived above are found in the same way where one has to fix $s_Y=\theta_Y=0$, which leads to $\Ssq_Y=\Srot_Y=\mathbb{I}$.

\underline{$\rank{\X}=0, \rank{\eff}=2$:} This case can also be treated using the above relations. Since $\X=0$ it follows that $\tau=0$, which together with Eq.~\eqref{eq:XYphysical} implies that $y \ge 1/2$. This channel is unitarily equivalent to the zero-transmission channel and has trivially a capacity equal to zero. The decomposition containing the fiducial channel is found as above where one has to fix $s_X=\thetaXone=\thetaXtwo=0$.

We remark that for $(\rank{\X},\rank{\eff}) \in \{(2,2),(0,2),(2,0)\}$ the physical action of $\chG$ corresponds (up to unitaries) to the action of $\chPI$. Indeed, by inserting Eqs.~\eqref{eq:effbyeffPI} and \eqref{eq:XbyXPI} in Eq.~\eqref{eq:chGXY} one obtains the canonical decomposition $\chG = \Utwo \circ \chPI \circ \Uone$, which in terms of the symplectic transformations reads as in Eq.~\eqref{eq:chGcan}, with
\begin{equation}\label{eq:chPIcan}
	\XH=\XPI, \quad \effH=\effPI, \quad \SUone = \Ssq_Y^{-1}\trans{\Srot'_Y}{\SrotXone'} \Ssq_X \SrotXtwo, \quad \SUtwo = \Srot_Y \Ssq_Y,
\end{equation}	
where $\Srot_Y'=\Sgenrot(\sign(\tau)\theta_Y)$ and ${\SrotXone'}=\Sgenrot(\sign(\tau)\thetaXone)$. In Fig.~\ref{fig:chEquiv} we sketched the equivalences found above.

\underline{$\rank{\X}=2,\rank{\eff}=1$:} This implies $y=0$ and together with Eq.~\eqref{eq:XYphysical} that $\tau=1$. The eigenvalues of $\eff$ now read $y_1=0,y_2>0$ (the other case $y_1>0, y_2=0$ can be treated equivalently). Similarly to the case $\rank{\eff}=2$ one can find a rotation $\Srot_Y$ such that $\trans{\Srot}_Y \eff \Srot_Y = \diag{(0,y_2)}$. Then, one can construct a squeezing operation $\Ssq_Y$ with $s_Y=-\frac{1}{2}\ln(2y_2)$ which yields
\begin{equation}\label{eq:effbyeffSQ}
	\eff = \Srot_Y \Ssq_Y \effSQ \Ssq_Y \trans{\Srot}_Y, \quad \effSQ = \diag{\left(0,\frac{1}{2}\right)}.
\end{equation}
The matrix $\effSQ$ can be recovered with an additional squeezer $\Ssq_T=\Ssq(s_T)$ in the limit of infinite squeezing, i.e. $\effSQ = \lim_{s_T \to \infty}\frac{1}{2}e^{-2s_T}\Ssq^{-2}_T$ from which follows
\begin{equation}\label{eq:effSQ}
	\eff = \lim_{s_T \to \infty}\frac{1}{2}e^{-2s_T} \Srot_Y \Ssq_Y \Ssq^{-2}_T \Ssq_Y \trans{\Srot}_Y = \lim_{s_T \to \infty}\frac{1}{2}e^{-2s_T} \Srot_Y \Ssq_{YT}^2 \trans{\Srot}_Y,
\end{equation}
where $\Ssq_{YT}=\Ssq(s_Y-s_T)$. Since $\rank{\X}=2$ we can decompose $\X$ as in Eq.~\eqref{eq:XbyXPI} but with the simplification $\tau=1$, i.e.
\begin{equation}\label{eq:XbyXSQ}
	\X = \SrotXone \Ssq_X \XSQ \SrotXtwo, \quad \XSQ=\mathbb{I}.
\end{equation}
We observe that we can replace $\XSQ=\XPS$, where $\XPS$ is as defined as above with $\tau = 1$. Thus, we get the same decomposition as stated in Eq.~\eqref{eq:XbyXPI}. Now one can recover both matrices $\X,\eff$ as a limiting case of Eq.~\eqref{eq:chPIbychPS}, namely,
\begin{equation}
	\X = \lim_{s_T \to \infty}\SUchG \XPS \SrotchG, \quad \eff = \lim_{s_T \to \infty}\SUchG \effPS \SUchGtrans,
\end{equation}
where in the definitions of $\SUchG$ \eqref{eq:SUchG}, $\SrotchG$ \eqref{eq:SrotchG} and $\effPS$ \eqref{eq:effPS} one has to make replacements $\tau \to 1$, $s_Y \to s_Y-s_T$ and $y \to \frac{1}{2}e^{-2s_T}$. This replacement only affects matrix $\effPS$ and rotations $\SrotPS$ and $\SrotchG$. Thus, we recovered both matrices $\X$ and $\eff$ as a limiting case of the decomposition stated in the Theorem.

Note that the physical action of $\chG$ in this case corresponds (up to unitaries) to the action of $\chSQ$: by inserting Eqs.~\eqref{eq:XbyXSQ} and \eqref{eq:effbyeffSQ} into Eq.~\eqref{eq:chGXY} we recover the canonical decomposition $\chG = \Utwo \circ \chSQ \circ \Uone$, which in terms of the symplectic transformations is given by Eq.~\eqref{eq:chGcan}, with
\begin{equation}\label{eq:chSQcan}
   	\XH=\XSQ, \quad \effH=\effSQ, \quad \SUone = \Ssq_Y^{-1}\trans{\Srot}_Y\SrotXone \Ssq_X \SrotXtwo, \quad \SUtwo = \Srot_Y \Ssq_Y.
\end{equation}

\underline{$\rank{\X}=1,\rank{\eff}=2$:} Since in this case $\tau=0$, it follows from Eq.~\eqref{eq:XYphysical} that $y \ge \frac{1}{2}$. The SVD of $\X$ now reads $\X = \SrotXone \diag(x_1,0)$ (the other case $x_1=0,x_2>0$ can be treated equivalently). One can define $\Ssq_X=\Ssq(s_X)$ with $s_X=\ln(x_1)$ such that 
\begin{equation}\label{eq:XbyXCS}
	\X = \SrotXone \Ssq_X \XCS, \quad \XCS = \diag{(1,0)}.
\end{equation}
Since $\XCS$ can be expressed as $\XCS = \lim_{s_T \to \infty}e^{-s_T}\Ssq_T$, where $\Ssq_T=\Ssq(s_T)$, Eq.~\eqref{eq:XbyXCS} becomes
\begin{equation}\label{eq:XCS}
	\X = \lim_{s_T \to \infty}e^{-s_T} \SrotXone \Ssq_X \Ssq_T = \lim_{s_T \to \infty}e^{-s_T} \SrotXone \Ssq_{XT},
\end{equation}
where $\Ssq_{XT}=\Ssq(s_X+s_T)$. Since $y \ge \frac{1}{2}$ we find as in the case $y>0$ treated above [see derivation of Eq.~\eqref{eq:effbyeffPI}], a rotation $\Srot_Y$ and squeezing $\Ssq_Y$ such that 
\begin{equation}\label{eq:effbyeffCS}
	\eff = \Srot_Y \Ssq_Y \effCS \Ssq_Y \trans{\Srot}_Y, \quad \effCS=\diag{(y,y)}, \quad y \ge \frac{1}{2}.
\end{equation}
Thus, we recover matrices $\X, \eff$ as a limiting case of Eq.~\eqref{eq:chPIbychPS}, i.e.
\begin{equation}\label{eq:chCSbychPS}
	\X = \lim_{s_T \to \infty}\SUchG \XPS \SrotchG, \quad \eff = \lim_{s_T \to \infty}\SUchG \effPS \SUchGtrans,
\end{equation}
where in the definitions of $\SUchG$ \eqref{eq:SUchG} and $\SrotchG$ \eqref{eq:SrotchG} one has to make replacements $\thetaXtwo \to 0$, $s_X \to s_X+s_T$ and $\tau \to e^{-2s_T}$. Note that this replacement affects $\SUchG$ but does not affect matrix $\eff$ stated in Eq.~\eqref{eq:effbyeffCS}. Therefore, we found also for the last case both matrices $\X,\eff$ as limiting cases of the decomposition stated in the Theorem.

Now we demonstrate that (up to unitaries) the physical action of $\chG$ in this case corresponds to the action of $\chCS$. By inserting Eqs.~\eqref{eq:effbyeffCS} and \eqref{eq:XbyXCS} in Eq.~\eqref{eq:chGXY}, we obtain 
\begin{equation}
	\chG(\CMin) = \SUtilde(\tilde{\X} \CMin \trans{\tilde{\X}}  + \effCS)\SUtildetrans, \quad \tilde{\X} = \Ssq_Y^{-1}\trans{\Srot}_Y\SrotXone \Ssq_X \XCS, \quad \SUtilde = \Srot_Y \Ssq_Y.
\end{equation}
For the real $2\times 2$ matrix $\tilde{\X}$ one can again obtain the SVD which leads to $\tilde{\X}=\tilde{\Srot}_X \tilde{\Ssq}_X \XCS$. Since $\tilde{\Ssq}_X \XCS = \XCS \tilde{\Ssq}_X$ we obtain the canonical decomposition $\chG = \Utwo \circ \chCS \circ \Uone$ in terms of the symplectic transformations as stated in Eq.~\eqref{eq:chGcan}, with
\begin{equation}\label{eq:chCScan}
	\XH=\XCS, \quad \effH=\effCS,\quad \SUone = \tilde{\Ssq}_X, \quad \SUtwo = \Srot_Y \Ssq_Y \tilde{\Srot}_X,
\end{equation}
Thus, we extended the Theorem to lower rank cases of $\X$ and $\eff$. 

We remark that both channels $\chSQ$ and $\chCS$ are obtained by gradually increasing $s_T$ and since for each finite $s_T$ Corollary~\ref{corCap} (stated in the main text) holds, it also remains valid in the limit $s_T \to \infty$.
\subsection{Derivation of simplified expression for the one-shot Gaussian capacity}
In the following we show that the one-shot Gaussian capacity of a single-mode Gaussian channel $\chG$ can be expressed as
\begin{eqnarray}
	\CapGchi(\chG,\Ninmod) & = &\max_{\CMin,\CMmod}\{\chiG(\seoutmod,\seout) \; | \; \Tr[\CMin+\CMmod] \le 2\Ninmod+1\},\label{eq:CapGchiSup}\\
	\chiG & = & g\left(\seoutmod-\frac{1}{2} \right)-g\left(\seout-\frac{1}{2} \right),\\
	g(x) & = & (x+1)\log_2(x+1)-x\log_2 x,\label{eq:g}
\end{eqnarray}
where $g(0)=0$, $\CMin$ is the CM of a pure Gaussian input state fulfilling $\det{(2\CMin)}=1$, $\CMmod$ is the CM of a classical Gaussian distribution used to displace the input state and to generate the modulated input state with CM $\CMinmod=\CMin+\CMmod$ where $\Tr[\CMinmod] \le 2\Ninmod+1$. Furthermore,  $\seout=\sqrt{\det \CMout}$ and $\seoutmod=\sqrt{\det \CMoutmod}$ are the symplectic eigenvalues of the output and modulated output state with CM $\CMout=\chG(\CMin)$ and $\CMoutmod=\chG(\CMinmod)$, respectively.

Equation~\eqref{eq:CapGchiSup} states that among all possible Gaussian sources characterized by a measure $\muG(d \disp,d \CM)$ over the set of Gaussian states $ \rhoG(\disp,\CM)$
of mean $\disp$ an CM $\CM$, the source optimizing the Gaussian capacity corresponds to a single pure Gaussian state $ \rhoG(0,\CMin)$ with covariance $\CMin$ fulfilling $\det{(2\CMin)}=1$, modulated by phase-space translatations following a Gaussian distribution with CM $\CMmod$. 

To achieve our goal we use the fact that the maximization inside the Gaussian capacity definition
\begin{equation}
    \CapGchi(\chG,\Ninmod) = \max_{\muG \, : \, \rhoinmodG \in {\setN}^{G}_{\Ninmod}} \left[ S(\chG[\rhoinmodG]) - \int{\muG(d \disp,d \CM)\, S(\chG[\rhoG(\disp,\CM)])} \right],
\end{equation}
can be divided into two different steps. In the first step, among all the sources $\muG(d \disp,d \CM)$ belonging to the set ${\setMu}^{G}_{\rhoinmodG}$
sharing the same average input state
\begin{equation}
	\rhoinmodG \equiv \rhoG(0,\CMinmod) = \int \muG(d \disp,d \CM) \rhoG{(\disp,\CM)},
\end{equation}
we maximize the modified Holevo quantity
\begin{equation}
	\tilde{\chi}(\chG,\Ninmod,\rhoinmodG) = S(\chG[\rhoinmodG]) - \min_{\muG \in {\setMu}^{G}_{\rhoinmodG}} \int{\muG(d \disp,d \CM)\, S(\chG[\rhoG(\disp,\CM)])}.
\label{eq:OptHolevo}
\end{equation}
Note that the choice of zero mean for the average input state $\rhoinmodG$ is natural because displacements do not change the entropy, however require energy. In the second and final step we optimize $\tilde{\chi}(\chG,\Ninmod,\rhoinmodG)$ over the average input state $\rhoinmodG$ satisfying the energy constraint $\Ninmod$, thus obtaining $C^{G}_\chi(\chG,\Ninmod)$. We use the fact that the minimum of the average output entropy appearing in equation (\ref{eq:OptHolevo}) can be rewritten as the \emph{Gaussian entanglement of formation} $\EG[\bar{\sigma}_{\rm BE}]$ (see \cite{WGKWC04}), i.e.
\begin{equation}
	\min_{\muG \in {\setMu}^{G}_{\rhoinmodG}} \int{\muG(d \disp,d \CM)\, S(\chG[\rhoG(\disp,\CM)])}=\EG[\bar{\sigma}_{\rm BE}]
\label{eq:MinOutEntropy}
\end{equation}
of a given bipartite mixed state $\bar{\sigma}_{\rm BE}=U_\chG \rhoinmodG_B \otimes \ket{0}\bra{0}_E U_\chG^\dagger$ with CM $\CMBEmod$ obtained by the unitary (Stinespring) dilation $U_\chG$ of channel $\chG$, such that $\chG[\rhoinmodG]={\rm Tr}_E[\bar{\sigma}_{\rm BE}]$. Indeed, the Gaussian entanglement of formation is defined as
\begin{equation}
	\EG[\bar{\sigma}_{\rm BE}]=\EG[U_\chG \rhoinmodG \otimes \ket{0}\bra{0} U_\chG^\dagger] = \min_{\muG \in {\setMu}^{G}_{\rhoinmodG}} \left[ \int{\muG(d \disp, d \CM) E[U_\chG \rhoG(\disp,\CM) \otimes \ket{0}\bra{0} U_\chG^\dagger]}\right].
\end{equation}
Here, $E[U_\chG \rhoG(\disp,\CM) \otimes \ket{0}\bra{0} U_\chG^\dagger]$ is the entanglement of a bipartite Gaussian state $\sigma_{\rm BE}=U_\chG \rhoG(\disp,\CM) \otimes \ket{0}\bra{0} U_\chG^\dagger$ with CM $\CMBE$. The entanglement $E$ of a bipartite state is quantified by the von Neumann entropy of any of its two reduced density operators. Equation~\eqref{eq:MinOutEntropy} not only simplifies the capacity definition to
\begin{equation}
 		\CapGchi(\chG,\Ninmod) = \max_{\rhoinmodG \in {\setN}^{G}_{\Ninmod}} \left[ S(\chG[\rhoinmodG]) - \EG[\bar{\sigma}_{\rm BE}] \right],
\end{equation}
but also leads to the proof of Eq.~\eqref{eq:CapGchiSup}. According to \cite{WGKWC04} the Gaussian entanglement of formation can be simplified to
	\begin{equation}\label{eq:GEF}
		\EG[\bar{\sigma}_{\rm BE}] = \min_{\CMBE} \{ E[\sigma_{\rm BE}] \, | \, \CMBE \le \CMBEmod \},
	\end{equation}
where the minimum is taken over a single pure bipartite Gaussian state with CM $\CMBE$. This implies the existence of a covariance matrix $\CMin$ such that the output entropy of pure Gaussian states $\rhoG(\disp,\CMin)$ achieves the minimum in Eq.~\eqref{eq:MinOutEntropy} for any $\disp$. Due to Eq.~\eqref{eq:GEF} and the fact that the symplectic transformation which corresponds to $U_\chG$ does not change the positivity $\CMBEmod-\CMBE \ge 0$ it follows that $\CMinmod-\CMin \ge 0$. Then modulating $\rhoG(\disp,\CMin)$ according to a Gaussian distribution with covariance matrix $\CMmod=\CMinmod-\CMin$ generates a source with average input state $\rhoinmodG$ with CM $\CMinmod$ saturating the bound of Eq.~\eqref{eq:OptHolevo}. Thus, the one-shot Gaussian capacity of a general $n$-mode Gaussian channel $\chG$ simplifies to
\begin{equation}\label{eq:CapGchimultigen}
	\CapGchi(\chG,\Ninmod) = \max_{\CMin,\CMmod} \{ S(\chG(\CMin+\CMmod)) - S(\chG(\CMin)) \; | \; \Tr[\CMin+\CMmod] \le 2n\Ninmod+n\}.
\end{equation}
The entropy of an $n$-mode Gaussian state $\rhoG(\disp,\CM)$ can be calculated in terms of the $n$ symplectic eigenvalues $\nu_i$ of $\CM$, i.e.
\begin{equation}
	S(\rhoG(\disp,\CM))=\sum_{i=1}^n \, g\left(\nu_i-\frac{1}{2}\right),
\end{equation}
where $g(x)$ is defined in Eq.~\eqref{eq:g}. Therefore, the final expression for the one-shot Gaussian capacity reads
\begin{equation}\label{eq:CapGchimulti}
	\CapGchi(\chG,\Ninmod) = \max_{\CMin,\CMmod}\left\{ \left[\sum\limits_{i=1}^n g\left(\seoutmod_i - \frac{1}{2} \right) - g\left(\seout_i - \frac{1}{2} \right)\right] \; \Biggl| \; \Tr[\CMin+\CMmod] \le 2n\Ninmod+n \right\},
\end{equation}
where $\seout_i$ and $\seoutmod_i$ are the symplectic eigenvalues of the CM $\CMout=\chG(\CMin)$ and CM $\CMoutmod=\chG(\CMinmod)$. For the one mode case Eq.~\eqref{eq:CapGchimulti} simplifies to the well-known expression (see e.g. \cite{EW07}) stated in Eq.~\eqref{eq:CapGchiSup}.
\subsection{Proof of Corollary~\ref{corCapGSup}}\label{sec:corCapGSup}
\begin{figure}
\centering
\includegraphics[width=0.65\textwidth]{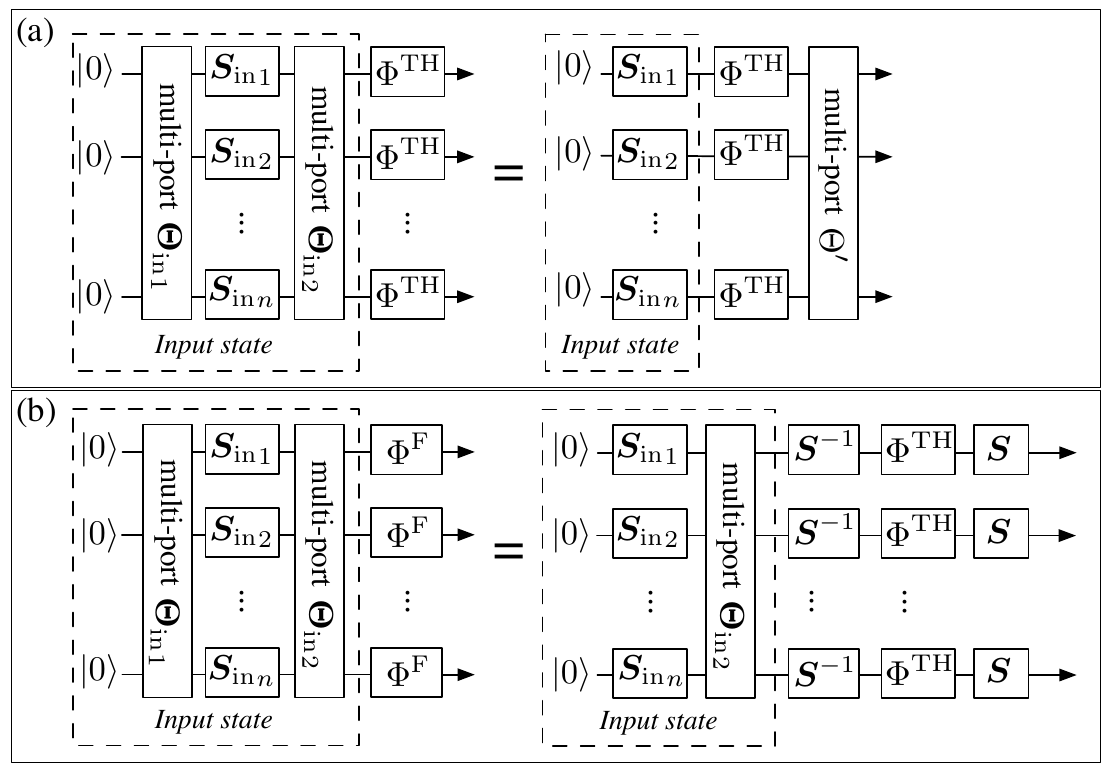}
\caption{Bloch Messiah decomposition: any multimode pure Gaussian state can be generated from the $n$-mode vacuum state and a set of single mode squeezers ${\Ssq_{\rm in}}_i$ preceded and followed by linear multi-port interferometers ${\Srot_{\rm in}}_1, {\Srot_{\rm in}}_2$. (a) Reduction of the scheme for a collection of channels $({\chPI})^{\otimes n}$ and (b) Reduction for the fiducial channel $(\chPS)^{\otimes n}$.}
\label{fig:chBlochMessiah}
\end{figure}
\begin{corCapGSup}\label{corCapGSup}
For a single-mode Gaussian channel $\chG$ with parameters $(\tau \ne 0, y >0)$, there exists a fiducial channel $\chPS$ as defined in Theorem~\ref{thmmapeqSup} such that
\begin{equation}\label{eq:CapGthrSup}
	\begin{split}
			& \CapG(\chG,\Ninmod) = \CapG(\chPS_{(\tau,y,s)},\Ninmod)= g\left(|\tau|\Ninmod+y\cosh(2s)+\frac{|\tau|-1}{2} \right) - g\left(y+\frac{|\tau|-1}{2}\right),\\
			& \mathrm{if~~} \Ninmod \ge \Nthr = \frac{1}{2} \left( e^{2|s|}+\frac{2y}{|\tau|} \sinh(2|s|) - 1 \right).
	\end{split}
\end{equation}
\end{corCapGSup}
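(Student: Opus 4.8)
\emph{Proof strategy.} By the Gaussian-capacity analogue of Corollary~\ref{corCap} mentioned in the main text, $\CapG(\chG,\Ninmod)=\CapG(\chPS_{(\tau,y,s)},\Ninmod)$ with $\chPS$ the fiducial channel of Theorem~\ref{thmmapeqSup}; it therefore suffices to evaluate $\CapG(\chPS,\Ninmod)$. Starting from the one-shot formula~\eqref{eq:CapGchiSup} and writing $\chPS(\CM)=\XPS\,\CM\,\XPS^{\mathsf T}+\effPS$ with $\XPS=\sqrt{|\tau|}\,\diag(1,\sign\tau)$ and $\effPS=y\,\diag(e^{2s},e^{-2s})$, a one-line $2\times2$ determinant computation gives, for $\CM=\bigl(\begin{smallmatrix}a&c\\c&b\end{smallmatrix}\bigr)$, the identity $\det\chPS(\CM)=\tau^2(ab-c^2)+|\tau|\,y\,(ae^{-2s}+be^{2s})+y^2$. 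Two decoupled optimizations then appear. For the subtracted entropy I minimise $\seout=\sqrt{\det\chPS(\CMin)}$ over \emph{pure} inputs ($ab-c^2=1/4$): the AM--GM inequality gives $ae^{-2s}+be^{2s}\ge2\sqrt{ab}\ge1$, so $\seout\ge y+|\tau|/2$, with equality precisely for the matched squeezed vacuum $\CMin^\star=\tfrac12\diag(e^{2s},e^{-2s})$. For the added entropy I maximise $\seoutmod=\sqrt{\det\chPS(\CMinmod)}$ over all legitimate covariance matrices with $\Tr\CMinmod\le2\Ninmod+1$ (the reachable set of modulated inputs $\CMin+\CMmod$ is exactly this set, since any valid covariance matrix dominates a pure one of no larger trace); since $c$ enters $\det\chPS$ only through $-\tau^2c^2$, the optimum is diagonal, $\CMinmod=\diag(A,B)$, and maximising the resulting concave function of $A$ at fixed trace $A+B=2\Ninmod+1$ yields $\seoutmod\le|\tau|\Ninmod+y\cosh(2s)+|\tau|/2$, attained at $A^\star=\Ninmod+\tfrac12-\tfrac{y\sinh2s}{|\tau|}$, $B^\star=2\Ninmod+1-A^\star$. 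By the monotonicity of $g$, combining the two bounds yields $\CapGchi(\chPS,\Ninmod)\le g\!\bigl(|\tau|\Ninmod+y\cosh2s+\tfrac{|\tau|-1}{2}\bigr)-g\!\bigl(y+\tfrac{|\tau|-1}{2}\bigr)$ for all $\tau\ne0$, $y>0$, which is exactly the claimed expression.

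\emph{Saturation and the threshold.} This upper bound is attained iff the two extremisers are simultaneously admissible, i.e.\ iff $\CMmod^\star:=\CMinmod^\star-\CMin^\star=\diag\!\bigl(A^\star-\tfrac12 e^{2s},\,B^\star-\tfrac12 e^{-2s}\bigr)\ge0$. Writing out these two inequalities and noting which one is active according to the sign of $s$ gives precisely $\Ninmod\ge\Nthr=\tfrac12\bigl(e^{2|s|}+\tfrac{2y}{|\tau|}\sinh 2|s|-1\bigr)$; one further checks that above this threshold $A^\star,B^\star>0$ and $A^\star B^\star\ge1/4$, so $\CMinmod^\star$ is a genuine covariance matrix and the trace constraint is saturated. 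Hence the one-shot equality holds for $\Ninmod\ge\Nthr$.

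\emph{Regularisation.} Finally I promote this to the capacity by showing $\CapGchi(\chPS^{\otimes n},n\Ninmod)=n\,\CapGchi(\chPS,\Ninmod)$ for $\Ninmod\ge\Nthr$; ``$\ge$'' is the product strategy, and for ``$\le$'' I use the $n$-mode formula~\eqref{eq:CapGchimulti}. The modulated output entropy is bounded by subadditivity of the von Neumann entropy, $S(\chPS^{\otimes n}(\CMinmod))\le\sum_i g(\seoutmod_i-\tfrac12)$ with $\seoutmod_i=\sqrt{\det\chPS(\CMinmod^{(i)})}$ the single-mode marginals; feeding in the single-mode bound on $\seoutmod$ together with the concavity of $g$ composed with the affine map $t\mapsto|\tau|t/2+y\cosh2s$ forces the total energy budget to be split evenly and gives $\le n\,g\!\bigl(|\tau|\Ninmod+y\cosh2s+\tfrac{|\tau|-1}{2}\bigr)$. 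For the pure-input term I use the identity $\chPS_{(\tau,y,s)}=U_{\Ssq(s)}\circ\chPI_{(\tau,y)}\circ U_{\Ssq(-s)}$ (legitimate because the minimum-output-entropy problem carries no energy constraint, and valid for both signs of $\tau$ since $\sigma_z$ commutes with the diagonal squeezers), which reduces the question to $\chPI^{\otimes n}$; since the latter commutes with passive interferometers (for $\tau<0$ after absorbing the spectrum-preserving conjugation by $\sigma_z^{\oplus n}$) and, by the Bloch--Messiah decomposition, every pure $n$-mode Gaussian covariance matrix equals $\Sgenrot\,D\,\Sgenrot^{\mathsf T}$ with $\Sgenrot$ passive and $D$ a direct sum of single-mode squeezed-vacuum matrices, one obtains $S(\chPS^{\otimes n}(\CMin))=\sum_i S(\chPI(D_i))\ge n\,g\!\bigl(y+\tfrac{|\tau|-1}{2}\bigr)$. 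Combining the two halves gives ``$\le$'', hence equality, and $\CapG(\chPS,\Ninmod)=\CapGchi(\chPS,\Ninmod)$ for $\Ninmod\ge\Nthr$, which together with the first paragraph is the corollary.

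\emph{Main obstacle.} I expect the regularisation step to be where the real content lies: once the extremisers $\CMin^\star,\CMinmod^\star$ are guessed, the single-mode optimization and the extraction of $\Nthr$ are essentially bookkeeping, whereas the multimode argument must combine entropy subadditivity, the squeezing-conjugation identity, the Bloch--Messiah reduction, and the concavity used to redistribute energy in just the right way — and it is exactly the hypothesis $\Ninmod\ge\Nthr$ that ensures the even-split point is the true $n$-mode optimum rather than a boundary configuration.
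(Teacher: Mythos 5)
Your proposal is correct and follows essentially the same route as the paper's proof: reduction to the fiducial channel via the Gaussian analogue of Corollary~\ref{corCap}, the identity $\chPS=U_{\Ssq(s)}\circ\chPI\circ U_{\Ssq(-s)}$ combined with the Bloch--Messiah decomposition to establish additivity of the Gaussian minimum output entropy, the maximum-entropy/thermal-state bound for the modulated term, and the positivity of $\CMmod=\CMinmod^\star-\CMin^\star$ as the origin of $\Nthr$. The only difference is presentational: you derive the optimal encoding \eqref{eq:maxminenc} by an explicit determinant optimization, whereas the paper simply exhibits it and verifies saturation.
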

\begin{proof}
The statement of Corollary~\ref{corCap} (see main text) formulated for the classical capacity can be straightforwardly extended to the Gaussian capacity, i.e. $\CapG(\chG,\Ninmod)=\CapG(\chPS,\Ninmod)$. This means that we only have to evaluate the Gaussian capacity of the fiducial channel $\chPS$ in order to find the Gaussian capacity of $\chG$. In the following we find $\CapG(\chPS,\Ninmod)$ explicitly for input energies $\Ninmod \ge \Nthr$. 

The proof is structured as follows. First, we prove that the Gaussian minimum output entropy of thermal channels $\chPI$ is additive (corresponding to $\chPS$ with $s=0$). Then we extend this proof to the fiducial channel for input energies $\Ninmod \ge \Nthr$ (we present a simple and physically motivated proof, which is an alternative to the one in \cite{H05}). Then we show that this also implies the additivity of the one-shot Gaussian capacity in this energy domain. Finally, we derive the exact expression for the one-shot Gaussian capacity.

In \cite{B05} it was shown that any pure $n$-mode Gaussian (input) state can be generated from the $n$-mode vacuum state, using $n$ single-mode squeezers ${\Ssq_{\rm in}}_i$ preceded and followed by a linear multi-port interferometer, corresponding to passive symplectic transformations ${\Srot_{\rm in}}_1$ and ${\Srot_{\rm in}}_2$ [see Fig.~\ref{fig:chBlochMessiah} (a)]. This decomposition can be further simplified since the $n$-mode vacuum state with CM $\mathbb{I}/2$ (where $\mathbb{I}$ is the $2n \times 2n$ identity matrix) remains unchanged under the action of the first interferometer ${\Srot_{\rm in}}_1$ and therefore, we can omit ${\Srot_{\rm in}}_1$ without changing the input state. The action of the channel ${(\chPI_{(\tau,y)})}^{\otimes n}$ in terms of symplectic transformations then reads
\begin{equation}\label{eq:CMoutBM}
	\CMout = \frac{1}{2}\XPI {\Srot_{\rm in}}_2 \Ssq_{\rm in} \mathbb{I} \Ssq_{\rm in} {\trans{\Srot}_{\rm in}}_2\XPI+ \effPI,
\end{equation} 
where $\Ssq_{\rm in}=\oplus_{i=1}^n{\Ssq_{\rm in}}_i$ corresponds to the joint operation of single mode squeezers, ${\Srot_{\rm in}}_2$ is the symplectic transformation of the second linear multi-port interferometer and 
\begin{equation}
	\XPI = \bigoplus_{i=1}^n \begin{pmatrix}
		\sqrt{|\tau|} & 0\\0 & \sign(\tau) \sqrt{|\tau|} 
	\end{pmatrix}, \quad \effPI = y \mathbb{I}.
\end{equation}
By inserting $\frac{1}{|\tau|}\XPI \XPI = \mathbb{I}$ between ${\Srot_{\rm in}}_2$ and $\Ssq_{\rm in}$ and between $\Ssq_{\rm in}$ and ${\trans{\Srot}_{\rm in}}_2$ in Eq.~\eqref{eq:CMoutBM} we obtain
\begin{equation}
	\begin{split}
		\CMout =  \Srot' \left(\frac{1}{2}\XPI \Ssq_{\rm in} \mathbb{I} \Ssq_{\rm in} \XPI + \effPI \right) \trans{\Srot'},
	\end{split}
\end{equation}
where $\Srot'=\frac{1}{|\tau|} \XPI {\Srot_{\rm in}}_2 \XPI$ and $\Srot' \effPI \trans{\Srot'} = \effPI$. One can confirm easily that $\Srot'$ is indeed a symplectic rotation matrix (corresponding to another interferometer at the output), i.e. that $\Srot' \bm{\Omega} \trans{\Srot'} = \bm{\Omega}$ and $\Srot' \trans{\Srot'}=\mathbb{I}$. Thus, the general Gaussian input state entering the channel $(\chPI)^{\otimes n}$ is reduced to a product state [right hand side of Fig.~\ref{fig:chBlochMessiah} (a)].

Now we calculate the entropy of the output state with CM $\CMout$. Since $\Srot'$ does not affect its entropy we can omit it. Therefore, we are left with $n$ vacuum modes entering the set of single-mode squeezers and then individually passing each channel $\chPI$. Therefore, the additivity of the (Gaussian) output entropy is proven and thus,
\begin{equation}
\min_{\CMin}\frac{1}{n}S(\CMout) =\min_{\CMin}\frac{1}{n}S\left({\left(\chPI\right)}^{\otimes n}\left(\frac{1}{2}\Ssq_{\rm in} \mathbb{I} \Ssq_{\rm in}\right)\right) =  
	\frac{1}{n}\sum\limits_i \min_{{\CMin}_i}S\left(\chPI({\CMin}_i)\right) = \frac{1}{n}\sum\limits_i \min_{{\CMin}_i} g\left({\seout}_i-\frac{1}{2}\right),
\end{equation}
where ${\CMin}_i=\frac{1}{2}{\Ssq^2_{\rm in}}_i$ is the CM of the mode $i$ that exits squeezer ${\Ssq_{\rm in}}_i$ and ${\seout}_i=\sqrt{\det{(\chPI({\CMin}_i))}}$ is the symplectic eigenvalue of the corresponding output state. The output entropy is minimized for ${\CMin}_i=\mathbb{I}/2, \forall i$, i.e. by removing all one-mode squeezers at the input. This leads to
\begin{equation}
	\min_{\CMin}\frac{1}{n}S(\CMout) = g\left(y+\frac{|\tau|-1}{2}\right).
\end{equation}
Now let us consider the same problem for the fiducial channel $\chPS$ as depicted in Fig.~\ref{fig:chBlochMessiah} (b), where we use again the Bloch-Messiah decomposition. The first interferometer ${\Srot_{\rm in}}_1$ can again be omitted because it does not affect the $n$-mode vacuum state. From the definition of the fiducial channel we have (for the case of one mode) the equivalence $\XPS=\XPI, \effPS = \Ssq \effPI \Ssq$, with $\Ssq=\diag(e^{s},e^{-s})$, where $\Ssq$ is the squeezing operation of the environment. This leads to the equality
\begin{equation}
	\chPS(\CMin)=\Ssq(\XPI \Ssq^{-1}\CMin \Ssq^{-1}\XPI + \effPI)\Ssq.
\end{equation}
As a consequence we can replace each fiducial channel by a thermal channel preceded by an anti-squeezer and followed by a squeezer [see right hand side of Fig.~\ref{fig:chBlochMessiah} (b)]. 

Now we focus again on the minimization of the output entropy. Then, the squeezers $\Ssq$ at the output of each channel $\chPI$ can be omitted since they do not change the entropy and we have no energy constraint on the output. We showed above that the entropy of the joint map ${(\chPI)}^{\otimes n}$ is minimized by the $n$-mode vacuum state. Thus, the multi-mode input state that minimizes the output entropy of the fiducial channel has to be in the $n$-mode vacuum state after passing the $n$ anti-squeezers $\Ssq^{-1}$ [see right hand side of Fig.~\ref{fig:chBlochMessiah} (b)]. 
Therefore, it is optimal to fix the input interferometer ${\Srot_{\rm in}}_2=\mathbb{I}$ and to chose each squeezer ${\Ssq_{\rm in}}_i$ at the input to ``undo'' each anti-squeezer, i.e. ${\Ssq_{\rm in}}_i=\Ssq, \, \forall i$. Thus, the $n$-mode Gaussian input state that minimizes the output entropy of the channel ${(\chPS)}^{\otimes n}$ is a product state with CM ${\CMin}=\oplus_{i=1}^n\Ssq^2/2$. It follows that the Gaussian minimum output entropy is additive and reads
\begin{equation}\label{eq:minSat}
	\min_{\CMin}\frac{1}{n}S\left({(\chPS)}^{\otimes n}(\CMin)\right) = g\left(y+\frac{|\tau|-1}{2}\right),
\end{equation}
where the minimization requires a certain amount of energy to undo each squeezer, which will be taken into account later.

We show now that Eq.~\eqref{eq:minSat} leads to the additivity of the one-shot Gaussian capacity of $\chPS$ for input energies $\Ninmod \ge \Nthr$. Using the expression of the one-shot Gaussian capacity $\CapGchi(\chPS,\Ninmod)$ stated in Eq.~\eqref{eq:CapGchimultigen} and using the definition of the Gaussian capacity, i.e.
\begin{equation}
	\CapG(\chG,\Ninmod) = \lim_{n \to \infty}\frac{1}{n}\CapGchi(\chG^{\otimes n},n\Ninmod),
\end{equation}
we can state the following upper bound:
\begin{equation}\label{eq:CapGup}
	\begin{split}
		\CapG(\chPS,\Ninmod) & \le \max_{\CMin,\CMmod}S\left(\chPS(\CMin+\CMmod)\right) - \lim_{n \to \infty}\min_{\CMin}\frac{1}{n}S\left({(\chPS)}^{\otimes n}(\CMin)\right),
	\end{split}	
\end{equation}
where the first term only needs to be maximized for a single use of the channel due to the subadditivity of the entropy. It is known that a thermal state maximizes the von Neumann entropy, therefore, the optimal modulated output state is a thermal state carrying the total number of photons, i.e. $\chPS(\CMin+\CMmod)=\CMoutmod=\diag{(\seoutmod,\seoutmod)}$, where
\begin{equation}\label{eq:seopt}
	\seoutmod= |\tau| \Ninmod + y\cosh(2s)+|\tau|/2.
\end{equation}
The second term in Eq.~\eqref{eq:CapGup} was already evaluated above [see Eq.~\eqref{eq:minSat}]. In summary, we found
\begin{eqnarray}
	\max_{\CMin,\CMmod}S\left(\chPS(\CMin+\CMmod)\right) & = & g\left(|\tau| \Ninmod + y\cosh(2s)+\frac{|\tau|-1}{2}\right),\label{eq:CapMax}\\
	\lim_{n \to \infty}\min_{\CMin}\frac{1}{n}S\left({(\chPS)}^{\otimes n}(\CMin)\right) & = & g\left(y+\frac{|\tau|-1}{2}\right).\label{eq:CapMin}		
\end{eqnarray}
The encoding which realizes both, the maximum and the minimum in Eqs.~\eqref{eq:CapMax} a \eqref{eq:CapMin} is given by
\begin{equation}\label{eq:maxminenc}
	\CMin=\Ssq^2/2, \quad,
	\CMmod=\CMinmod-\CMin, \quad
	\CMinmod=\begin{pmatrix}
		\Ninmod+\frac{1}{2}-\frac{y}{|\tau|}\sinh(2s) & 0\\0 & \Ninmod+\frac{1}{2}+\frac{y}{|\tau|}\sinh(2s)
	\end{pmatrix}.
\end{equation}
This encoding can only be realized if 
\begin{equation}
	\Ninmod \ge \Nthr= \frac{1}{2} \left( e^{2|s|}+\frac{2y}{|\tau|} \sinh(2|s|) - 1 \right),
\end{equation}
because $\Ninmod < \Nthr$ would imply $\CMmod<0$ which would be non-physical. Thus, we have shown that 
\begin{equation}
	\CapG(\chPS,\Ninmod)=\CapGchi(\chPS,\Ninmod), \quad \Ninmod \ge \Nthr,
\end{equation}	
where $\CapGchi(\chPS,\Ninmod)$ is the right hand side of Eq.~\eqref{eq:CapGthrSup}. This proves the corollary.
\end{proof}
\subsection{Additional upper bounds on the classical capacity}
The upper bound on the classical capacity stated in Corollary~\ref{corCapGBounds} in the main text was obtained by generalizing the bounds that were found for thermal channels $\chPI$ (with $\tau>0$) in \cite{KS13}. Recently, additional upper bounds were obtained for the same channels \cite{GLMS12} and we extend them now to general channels $\chG$ with $\tau>0,y>0$. 

The bounds were obtained by maximizing the first term of the classical capacity [see its definition in Eqs.~\eqref{eq:capacity} and \eqref{eq:chi} in the main text] and by obtaining a lower bound $b$ on the second term, i.e. 
\begin{equation}
	\lim_{n \to \infty}\frac{1}{n}\min_{\mu}\int{\mu(dx) \, S\left((\chPI)^{\otimes n}[\rhoX]\right)} \ge b,
\end{equation}
where in total six bounds $b$ are presented in \cite{GLMS12}. We stated in the proof of Corollary~\ref{corCapGSup} that the fiducial channel is equivalent to a thermal channel preceded by an anti-squeezer and followed by a squeezer [see Fig.~\ref{fig:chBlochMessiah} (b)]. The following squeezer does not change the output entropy. Furthermore, one can always undo the preceding squeezer because the bound $b$ is not subject to an energy constraint. Therefore, any lower bound $b$ on the minimal output entropy of the thermal channel is as well a lower bound on the minimal output entropy of the fiducial channel. 

The first term of the classical capacity is known to be maximized by a thermal state carrying the total number of photons. Its entropy was already calculated in Eq.~\eqref{eq:CapMax}, i.e.
\begin{equation}\label{eq:CapMaxBound}
	\lim_{n \to \infty}\frac{1}{n}\max_{\mu \, : \, \rhoinmod \in \setN_{\Ninmod}} S\left((\chPS)^{\otimes n}[\rhoinmod]\right) = g\left(|\tau| \Ninmod + y\cosh(2s)+\frac{|\tau|-1}{2}\right).
\end{equation}
Corollary~\ref{corCap} in the main text states that $C(\chG,\Ninmod)=C(\chPS,\Ninmod)$. Therefore, any bound on the classical capacity of the fiducial channel $\chPS$ is also a bound on the classical capacity of a an arbitrary channel $\chG$. Thus, we obtained a list of upper bounds on the classical capacity that reads
\begin{equation}
	C(\chG,\Ninmod) \le g\left(|\tau| \Ninmod + y\cosh(2s)+\frac{|\tau|-1}{2}\right) - b, \quad \tau > 0, y>0,
\end{equation}
where $b$ has to be taken from \cite{GLMS12}. Note that with increasing $s$ those bounds become less and less tight because $b$ does not depend on $s$. However, as in the case of thermal channels \cite{GLMS12} some of those bounds in a certain region of channel parameters are tighter than the bound $\overline{C}$ given by Eq.~\eqref{eq:CapGBound} in the main text.
\end{widetext}

\end{document}